%% The first command in yourLaTeX source must be the \documentclass command.
%%
%% Options:
%% twocolumn : Two column layout.
%% hf: enable header and footer.
\documentclass[
% twocolumn,
% hf,
]{ceurart}

%%
%% One can fix some overfulls
\sloppy

%%
%% Minted listings support 
%% Need pygment <http://pygments.org/> <http://pypi.python.org/pypi/Pygments>
\usepackage{listings}
%% auto break lines
\lstset{breaklines=true}

%%
%% DL Logo for inline use
\usepackage{graphbox}
\DeclareRobustCommand{\DLLogo}{%
  \begingroup\normalfont
  \kern-1.75pt\includegraphics[align=c,height=1.25\baselineskip]{dl}\kern-1.5pt%
  \endgroup
}

%%
%% Math
\usepackage{mathtools}

%%
%% AMS Theorems
\usepackage{amsthm, thmtools}
\theoremstyle{plain}
\newtheorem{thm}{Theorem}
\newtheorem{coro}{Corollary}
\newtheorem{lem}{Lemma}
\theoremstyle{definition}
\newtheorem{defn}{Definition}
\theoremstyle{plain}
\newtheorem{example}{Example}
% \newcommand{\nolistindent}{\hspace*{\dimexpr-\leftmargini-\leftmarginii-\leftmarginiii}}

%% Acronyms
\usepackage[printonlyused]{acronym}

%% References
\usepackage[noabbrev, capitalise]{cleveref}
\crefname{ax}{Axiom}{Axioms}
\crefname{thm}{Theorem}{Theorems}
\crefname{defn}{Definition}{Definitions}
\crefname{coro}{Corollary}{Corollaries}
\crefname{prop}{Proposition}{Propositions}
\crefname{lem}{Lemma}{Lemmata}
\crefname{example}{Example}{Examples}

%% Own commands
\usepackage{macros}

%% Two Versions (Arxiv and Workshop (dl))
% \usepackage[dl]{tagging}
\usepackage[arxiv]{tagging}

\tagged{arxiv}{% disable logos
\RenewDocumentCommand \logonote { m } {}
\conference{}%
}

%%
%% end of the preamble, start of the body of the document source.
\begin{document}

%%
%% Rights management information.
%% CC-BY is default license.
\copyrightyear{2023}
\copyrightclause{Copyright for this paper by its authors.
	Use permitted under Creative Commons License Attribution 4.0
	International (CC BY 4.0).}

%%
%% This command is for the conference information
\tagged{dl}{
\conference{\DLLogo{} DL 2023: 36th International Workshop on Description Logics,
	September 2--4, 2023, Rhodes, Greece}
} 
\tagged{arxiv}{
}

%%
%% The "title" command
\title{Description Logics Go Second-Order — Extending EL with Universally Quantified Concepts}

%%
%% The "author" command and its associated commands are used to define
%% the authors and their affiliations.
\author[1]{Joshua Hirschbrunn}[%
email=joshua.hirschbrunn@uni-ulm.de
]
\address[1]{University of Ulm,
   James-Franck-Ring, 89081 Ulm, Germany}
\author[1]{Yevgeny Kazakov}[%
email=yevgeny.kazakov@uni-ulm.de
]

%%
%% The abstract is a short summary of the work to be presented in the
%% article.
\begin{abstract}
    The study of Description Logics have been historically mostly focused on features that can be translated to decidable fragments of first-order logic.    
    In this paper, we leave this restriction behind and look for useful and decidable extensions outside first-order logic. We introduce universally quantified concepts, which take the form of variables that can be replaced with arbitrary concepts, and define two semantics of this extension.
    A schema semantics allows replacements of concept variables only by concepts from a particular language, giving us axiom schemata similar to modal logics.    
    A second-order semantics allows replacement of concept variables with arbitrary subsets of the domain, which is similar to quantified predicates in second-order logic.

    To study the proposed semantics, we focus on the extension of the description logic $\EL$.
    We show that for a useful fragment of the extension, the conclusions entailed by the different semantics coincide, allowing us to use classical $\EL$ reasoning algorithms even for the second-order semantics.
    For a slightly smaller, but still useful, fragment, we were also able to show polynomial decidability of the extension.    
    This fragment, in particular, can express a generalized form of role chain axioms, positive self restrictions, and some forms of (local) role-value-maps from KL-ONE, without requiring any additional constructors.
\end{abstract}

%%
%% Keywords. The author(s) should pick words that accurately describe
%% the work being presented. Separate the keywords with commas.
\begin{keywords}
	EL \sep
    Quantified Concepts \sep
	Concept Variables \sep
	Second-Order Logic \sep
    Modal Logic \sep
	Ontology Design Patterns
\end{keywords}

%%
%% This command processes the author and affiliation and title
%% information and builds the first part of the formatted document.
\maketitle

%%==============================================================================%%

\section{Introduction and Motivation}
\label{sec:introduction}

After the famous undecidability result of the early knowledge representation language used in the KL-ONE system \cite{shmidt-schauss89klone-undecidable}, it has been commonly agreed that \acp{DL} should be restricted to decidable fragments of \ac{FOL},  preferably of lower complexity.
The subsequent search for suitable languages has resulted in a large spectrum of \acp{DL} ranging from simple tractable languages, such as $\EL$ \cite{baader2005pushing} and $\DLLite$ \cite{dl-lite} to very expressive languages, such as $\SROIQ$ \cite{HKS06-sroiq}, used as the basis of the Web ontology language OWL 2 \cite{owl2-semantics}. 
After these extensive studies, finding new interesting fragments of \ac{FOL}, has become increasingly difficult.
That is why, in this paper, we look at some features from \ac{SOL}, specifically at \emph{universally quantified concepts}.
Syntactically, we introduce new
\emph{concept variables}, which can be used in place of classical concepts. 
For example, an axiom $\exists{\textsf{owns}}.(X\sqcap\textsf{Pet})\sqsubseteq\exists{\textsf{feeds}}.X$ expresses that everyone who owns a pet from a set of objects $X$ must feed someone from this set.
Such concept variables $X$ are assumed to be implicitly \emph{universally quantified}, that is, the above property should hold for \emph{every reasonable choice} of a set $X$.

One possible way of defining semantics of our language extension, is to regard axioms with concept variables as \emph{axiom schemata}.
Under this reading, the reasonable choices of $X$ in the above axiom, are sets defined by \emph{concepts} of a specific form, for example, all (atomic) concepts from an ontology, or all concepts that can be built in a particular \ac{DL}, such as $\EL$.
After replacing each variable with every possible concept, one obtains a (possibly infinite) set of ordinary \ac{DL} axioms, for which the logical entailment is defined in the standard way.

Axiom schemata is certainly not a new concept.
In fact, most logic languages, including \ac{PL}, \ac{FOL}, and \acp{ML} were originally defined \emph{axiomatically}.
For example, \ac{PL} can be defined using 3 axiom schemata: $X\to X$, $X\to(Y\to X)$, and $(X\to(Y\to Z))\to ((X\to Y)\to (X\to Z))$, in which every variable can be replaced with any propositional formula.
Many \acp{ML} were defined by extending \ac{PL} with new axiom schemata, for example, K4: $\Box X\to\Box\Box X$ (see, e.g., \cite{blackburn2001modal}).
In the context of \acp{DL}, axiom schemata occur in different contexts: Nominal Schemas \cite{carral2023efficient,krotzsch2011better,carral2013towards} introduce (nominal) variables that can be replaced with individual names.
\acp{ODP} \cite{xiang2015ontorat,skjaeveland2018practical,kindermann2018generating,krieg2019generic,kindermann2019comparing,skjaeveland2019pattern,borgida2012experiences},
use concept variables to define axiom templates, which can be used to generate ordinary axioms for specific applications.
Axiom schemata with existentially-quantified concepts can be also used to define operations, such as interpolants or most-common subsumer \cite{colucci2010secondorder}.
To obtain decidability results, most works restrict the set of values that can be used for replacement of variables so that the resulting set of (ordinary) axioms is \emph{finite}.
However, even for unrestricted axiom schemata, such as those found in \acp{PL} and \acp{ML}, it can be often shown that it is sufficient to replace variables with only finitely-many formulas found in (or built using) the entailment to be proved.
For \acp{ML}, this \emph{sub-formula property} usually follows from \emph{cut-free sequent-style} calculi (see, e.g., \cite{fitting07proof-systems}).

The main advantage of the schema semantics for concept variables, is that any \ac{DL} reasoning procedure could, in principle, be used for checking entailment in the corresponding \ac{DL} extension, by systematically generating instances of axiom schemata and checking entailment from the resulting increasing sets of ordinary axioms.
This immediately implies \emph{semi-decidability} of the schema extension and, in the cases when a form of sub-formula property can be proved, also \emph{decidability} and \emph{complexity} results.
Schema semantics, however, also has disadvantages, one of which, is that the entailments depend on the choices of concepts used for variable replacements.
Consider, for example, axiom schema $\top\sqsubseteq\exists r.X$. 
If we only allow $\EL$ concepts $C$ for variable replacements, the resulting set of axioms is satisfiable since there is a model (with one element) that interprets every $\EL$ concept (including every $\exists r.C$) by the whole domain.
If, however, we additionally allow $X$ to be replaced by $\bot$, i.e., by taking concepts from $\EL^\bot$, then, clearly, this schema becomes unsatisfiable.
As a consequence, the $\EL$ axiom schema obtains new $\EL$ logical conclusions (e.g., $A\sqsubseteq B$) when viewing it in a context of a larger language (e.g., $\EL^\bot$).
In the next section we give a similar example when an $\EL$ schema obtains new $\EL$ consequences with $\ALC$ concept replacements, this time, without making axioms inconsistent.\footnote{To see why this could be a concern, imagine that the axiom schemata of \ac{PL} would obtain new logical consequences in the language of \ac{PL} when allowing variables to be replaced by modal formulas!}

Motivated by the above consideration, we define another, \emph{language-independent} semantics of concept variables.
Specifically, an axiom with concept variables is satisfied by an interpretation if for \emph{every} assignment of concept variables to \emph{subsets} of its domain, the axiom is satisfied in the extension of this interpretation in which the variables are interpreted by these subsets (as ordinary concept names).
For example, axiom $\top\sqsubseteq\exists r.X$ is \emph{not} satisfied in any interpretation $\I$ since for the assignment of $X\mapsto\emptyset$, this axiom does not hold if we extend $\I$ by $X^\I=\emptyset$.
On the other hand, axiom $X\sqsubseteq\exists r.X$ is satisfied in exactly those interpretations $\I$ such that $r^\I$ is a reflexive relation.
Similarly, axiom $\exists{\textsf{owns}}.(X\sqcap\textsf{Pet})\sqsubseteq\exists{\textsf{feeds}}.X$ is satisfied in exactly those interpretations $\I$ such that $\tuple{x,y}\in\textsf{owns}^\I$ and $y\in\textsf{Pet}^\I$ imply ${\tuple{x,y}}\in\textsf{feeds}^\I$.\footnote{Note that this property can be expressed in $\SROIQ$ using a fresh role $\textsf{pet}$: $\textsf{Pet} \sqsubseteq \exists \textsf{pet}.\textsf{Self}$ and $\textsf{pet} \circ \textsf{owns} \sqsubseteq \textsf{feeds}$}
Essentially, under this semantics, concept variables correspond to universally-quantified second-order (unary) predicates.
Hence, we refer to this semantics as \emph{second-order semantics}.

It is easy to see that the second-order semantics is \emph{stronger} than the schema semantics.
Indeed, if an axiom is satisfied in an interpretation under the second-order semantics, then for every replacement of variables by concepts, the resulting axioms are also (classically) satisfied in this interpretation because variables can be assigned to interpretations of these concepts in the second-order semantics.
The converse is not true, as has been shown on the example of axiom $\top\sqsubseteq\exists r.X$ for the $\EL$-schema semantics.
Furthermore, whereas schema entailment can be always reduced to first-order entailment (from a possibly infinite set of formulas), there are axioms whose models under second-order semantics cannot be expressed by first-order formulas.
For example, the well-known \ac{ML} \emph{McKinsey axiom} $\Box \Diamond X \rightarrow \Diamond \Box X$, which can be written in the \ac{DL} $\mathcal{ALC}$ as $\forall r.\exists r.X \sqsubseteq \exists r.\forall r.X$, 
cannot be translated into even an infinite set of first-order formulas that holds in exactly the same interpretations (of $r$) \cite{goldblatt1975firstorder}.
Hence, even semi-decidability for the second-order extension of $\mathcal{ALC}$ seems to be an open question.

To combine advantages of the two proposed semantics (semi-decidability and schema language independence), it makes sense to look for restricted use of concept variables in \acp{DL} for which the entailment under two semantics \emph{coincide}.
In this paper, we describe a useful extension of the \ac{DL} $\EL$ with concept variables for which this is the case (\cref{sec:coincide}), and a further restriction, for which these (equivalent) entailments are \emph{polynomially decidable} (\cref{sec:decidabilty}).
Even though the use of concept variables in this fragment is restricted, it can express several other \ac{DL} features, such as role chain axioms, positive self restrictions, reflexive roles, some forms of (local) role-value-maps from KL-ONE, and their generalizations (see \cref{sec:conclusions}).

The main idea for proving these results, is to show that the standard $\EL$ canonical model \cite{baader2005pushing} for axioms obtained by replacing concept variables with certain \emph{relevant} $\EL$ concepts, is also a model of the original axioms under the second-order semantics.
This property holds because for our restricted form of axioms, it is sufficient to replace concept variables with only \emph{singleton subsets} of the domain to capture the second-order semantics.\footnote{This means, in particular, that our restricted axioms can be translated to \ac{FOL}. Note that any fragment for which the second-order entailment coincides with the schema entailment can be translated to \ac{FOL} using the standard \ac{FOL} translation for the schema instances.}
For example, if $X\sqsubseteq\exists r.X$ holds for all replacements of $X$ with singleton subsets of the domain (that is, when $r$ is reflexive) then this axiom also holds for replacements of $X$ with any larger set.
To obtain the decidability and complexity result, we define a further restriction for which it is sufficient to use only polynomially-many relevant concepts for checking the schema entailment.

\begin{taggedblock}{dl}
    Because of limited space, we omit most proofs in this paper. They can be found in the accompanying technical report \cite{hirshbrunn2023elx-tr}. % TODO
\end{taggedblock}

%%==============================================================================%%

\section{Schema Semantics and Second-Order Semantics}
\label{sec:semantics}

We start by formally defining our extension of \ac{DL}s with concept variables:

\begin{defn}[Syntax]
    The \emph{syntax} of \ac{DL}s with concept variables consists of disjoint and countably infinite sets $N_C$ of \emph{concept names}, $N_R$ of \emph{role names}, and $N_X$ of \emph{concept variables}. 
    Given a base $\ac{DL}$ $\L$ that is a fragment of $\SROIQ$, such as $\EL$ and $\ALC$, we define by $\LX$ its corresponding \emph{extension with concept variables}, in which the elements from $N_X$ can be used as concepts.
    We define, an \emph{$\LX$-ontology} as a (possibly infinite) set $\KB$ of $\LX$-axioms.

    Let $\ex$ be either an $\LX$-concept, an $\LX$-axiom, or an $\LX$-ontology.
    We denote by $\sub(\ex)$ (all) subconcepts\footnote{\emph{Subconcepts} are defined to be substrings of the expression that are valid concepts.} of $\ex$.
    For $\LX$-concepts and $\LX$-axioms, we split $\sub(\ex)$ into $\sub^+(\ex)$ and $\sub^-(\ex)$ the set of concepts that \emph{occur positively}, respectively \emph{negatively} in $\ex$\footnote{A concept occurs positively (negatively) in an axiom, if it occurs on the right side of the axiom under even (odd) number of nested negations or on the left side under odd (even) number of nested negations.} (i.e.\ $\sub(\ex)=\sub^+(\ex)\cup\sub^-(\ex)$).
    We denote by $\vars(\ex)=\sub(\ex)\cap N_X$ the set of \emph{concept variables occurring in $\ex$}.
    We say that $\ex$ is \emph{ground} if $\vars(\ex)=\emptyset$.
    
    A \emph{(concept variable) substitution} is a mapping $\theta = [X_1/C_1, \dots, X_n/C_n]$ that assigns concepts $C_i$ to concept variables $X_i$ ($1\le i\le n$). 
    We say that $\theta$ is \emph{ground} if all $C_i$ are ground $(1\le i\le n)$.
    We denote by $\theta(\ex)$ the result of \emph{applying the substitution} to $\ex$, defined in the usual way.
\end{defn}

From now on we assume that \ac{DL} concepts, axioms, and ontologies may contain concept variables.
When this is not the case, we explicitly say that these concepts, axioms, or ontologies are \emph{ground} or \emph{classical}.

\begin{example}
    \label{example:syntax}
    Consider the following two (non-ground) axioms:
    \begin{align}
		\alpha=\exists r.X \sqcap \exists s.X & \sqsubseteq \exists t.X   \label{ax:EL},  \\
        \beta=\exists r.X \sqcap \exists s.Y & \sqsubseteq \exists t.(X\sqcup Y). \label{ax:ALC}
    \end{align}
  Axiom $\alpha$ belongs to $\ELX$ whereas axiom $\beta$ belongs to $\ALCX$ due to the use of concept disjunction $\sqcup$.
  Further, $\sub^+(\alpha) =\set{\exists t.X, X}$,
  $\sub^-(\beta) =\set{\exists r.X\sqcap\exists s.Y, \exists r.X, \exists s.Y, X, Y}$,
  and $\vars(\beta)=\set{X,Y}$.
  Finally, for a (non-ground) substitution $\theta=[X/X\sqcup Y]$, we have:
  \begin{align}
  \theta(\alpha) = \exists r.(X\sqcup Y)\sqcup\exists s(X\sqcup Y)\sqsubseteq\exists t.(X\sqcup Y).
  \end{align}
\end{example}

Intuitively, axioms $\alpha$ and $\beta$ can be thought as \emph{axiom schemata} representing all axioms obtained by replacing the concept variables $X$ and $Y$ with ordinary (ground) concepts.
However, the choice of such ground concepts is not always obvious. 
Should variables be replaced by just atomic concepts? 
Or only by concepts appearing in the given ontology?
Or by any concepts that can be constructed in a particular \ac{DL}?
Clearly, each of these choices may result in different logical consequences and algorithmic properties of the resulting schema languages.
To handle all such choices, we provide a general (parameterized) definition of schema semantics.

\begin{defn}[Schema Semantics]
    \label{def:semantics:schema}
    Let $\KB$ be an $\LX$-ontology for some \ac{DL} $\L$, and $H$ a (possibly infinite) set of $\L$-concepts called a \emph{concept base}.
    For an $\LX$-axiom $\alpha$, and $\LX$-ontology $\KB$,
    define by $\alpha_{\downarrow H}=\set{\alpha[X_1/C_1, \dots
        X_n/C_n]\mid X_i\in\vars(\alpha)\: \& \: C_i\in H}$ 
        and $\KB_{\downarrow H}=\bigcup_{\alpha\in\KB} \alpha_{\downarrow H}$
        the set of \emph{$H$-ground instances} of $\alpha$ and $\KB$, respectively.

    We write $\KB\models^{\ast}_H\alpha$ if $\KB_{\downarrow H}\models\alpha_{\downarrow H}$ and say that $\alpha$ is a logical consequence of $\KB$ under the \emph{schema semantics} (for a concept base $H$).
    Finally, we write $\alpha_{\downarrow\L}$, $\KB_{\downarrow\L}$ and $\KB\models^{\ast}_\L\alpha$ instead of $\alpha_{\downarrow H}$, $\KB_{\downarrow H}$ and $\KB\models^{\ast}_H\alpha$, respectively, if $H$ is the set of \emph{all} $\L$-concepts.
\end{defn}

Clearly, if the concept base $H$ is \emph{finite}, entailment under the schema semantics for $H$ can be reduced to the standard \ac{DL} entailment.

\begin{example}[Example~\ref{example:syntax} Continued]
\label{ex:schema:conservativity}
Notice that
$\set{\beta}\models^{\ast}_H\alpha$ for any concept base $H$.
Indeed, take any $\alpha'=\theta(\alpha)=\exists r.C\sqcap\exists s.C\sqsubseteq\exists t.C\in\alpha_{\downarrow H}$.
Then $\beta_{\downarrow H}\owns\beta[X/C,Y/C]=\exists r.C\sqcap\exists s.C\sqsubseteq\exists t.(C\sqcup C)\models
\alpha'$.
Also, notice that if $H$ is closed under concept disjunctions (i.e., $C\in H$ and $D\in H$ imply $C\sqcup D\in H$) then
$\set{\alpha}\models^{\ast}_H\beta$.
Indeed, take any $\beta'=\theta(\beta)=\exists r.C\sqcap\exists s.D\sqsubseteq\exists t.(C\sqcup D)\in\beta_{\downarrow H}$ for $\theta=[X/C,Y/D]$. 
Since $H$ is closed under disjunction, we have $\alpha_{\downarrow H}\owns\alpha[X/C\sqcup D]=\exists r.(C\sqcup D)\sqcap\exists s.(C\sqcup D)\sqsubseteq\exists t.(C\sqcup D)\models\beta'$.
Consequently, $\set{\alpha}\models^{\ast}_\ALC\beta$.
However, it can be shown that $\set{\alpha}\not\models^{\ast}_\EL\beta$.
To prove this, consider the  $\EL$ ontology:
\begin{equation}
\KB = \set{~A\sqsubseteq\exists r.B\sqcap\exists s.C,\quad
\exists t.B\sqsubseteq D,\quad\exists t.C\sqsubseteq D~}
\label{eq:conservativity:KB}
\end{equation}
It is easy to see that $\KB\cup\beta_{\downarrow\EL}\supseteq \KB\cup\set{\beta'}\models A\sqsubseteq D$ for $\beta'=\beta[X/B,Y/C]=\exists r.B\sqcap\exists s.C\sqsubseteq \exists t(B\sqcup C)$.
Hence $\KB\cup\set{\beta}\models^{\ast}_{\EL} A\sqsubseteq D$.
We show that $\KB\cup\set{\alpha}\not\models^{\ast}_{\EL} A\sqsubseteq D$, which, in particular, implies that $\set{\alpha}\not\models^{\ast}_\EL\beta$, for otherwise $\KB\cup\set{\alpha}\models^{\ast}_{\EL}\KB\cup\set{\beta}\models^{\ast}_{\EL} A\sqsubseteq D$.

To prove that $\KB\cup\set{\alpha}\not\models^{\ast}_{\EL} A\sqsubseteq D$, consider the interpretation $\I=(\Delta^\I,\cdot^\I)$ defined by:
$\Delta^{\I}  = \set{a,b,c}$, 
$A^\I = \set{a}$,
$B^\I = \set{b}$,
$C^\I = \set{c}$,
$r^\I = \set{\tuple{a,b}}$,
$s^\I = \set{\tuple{a,c}}$,
$t^\I = \set{\tuple{a,a}}$,
and $E^\I=h^\I=\emptyset$ for all remaining $E\in N_C$ and $h\in N_R$. 
Clearly, $\I\models\KB$ and $\I\not\models A\sqsubseteq D$.
It remains to prove that $\I\models\alpha_{\downarrow\EL}$.
Take any $\alpha[X/F]=\exists r.F\sqcap\exists s.F\in\exists t.F \in\alpha_{\downarrow\EL}$ and any $d\in (\exists r.F\sqcap\exists s.F)^\I$. 
Then $d=a$ and $\set{b,c}\subseteq F^\I$ by definition of $r^\I$ and $s^\I$. 
But then $F=\top$ since $\sizeof{F^\I}\le 1$ for all other $\EL$ concepts $F$.
Then $d=a\in (\exists t.\top)^\I=(\exists t.F)^\I$ as required.

Combining the above observations, we obtain:
$\KB\cup\set{\alpha}\models^{\ast}_{\ALC}\KB\cup\set{\beta}\models^{\ast}_{\EL}\KB\cup\set{\beta}\models^{\ast}_{\EL} A\sqsubseteq D$, however, $\KB\cup\set{\alpha}\not\models^{\ast}_{\EL} A\sqsubseteq D$.
\end{example}

Example~\ref{ex:schema:conservativity} shows that, for schema semantics, an ontology formulated in one \ac{DL} may have different conclusions, even in the same \ac{DL}, when the concept base is extended to a larger language.
This goes against the usual understanding of logic, as this means that the consequences of an ontology are not determined by the ontology alone. 
To mitigate this problem, we consider the second-order semantics that is independent of a concept base.

\begin{defn}[Second-Order Semantics]
\label{def:semantics:second-order}
    Let $\I = (\Delta^{\I}, \cdot^{\I})$ be an
	interpretation. 
    A \emph{valuation}\index{valuation} for $\I$ (also called
    a variable assignment) is a mapping
	$\eta$ that assigns to every variable $X\in N_X$ a subset $\eta(X) \subseteq \Delta^{\I}$. 
    The \emph{interpretation of concepts} $C^{\I, \eta}$ and \emph{satisfaction of axioms} $\I \models^{2}_{\eta} \alpha$ \emph{under $\I$ and $\eta$} is defined in the same way as for the standard \ac{DL} semantics by treating concept variables $X$ as ordinary concept names interpreted by $\eta(X)$.    
    We write $\I \models^{2} \alpha$ if $\I\models^{2}_\eta\alpha$ for every \emph{valuation} $\eta$.
    Finally, for an ontology $\KB$, we write $\I\models^{2}\KB$ if $\I\models^{2}\beta$ for every $\beta\in\KB$, and we write $\KB\models^{2} \alpha$ if $\I\models^{2}\KB$ implies $\I\models^{2}\alpha$.
\end{defn}

Alternatively to the model-theoretic definition (Definition~\ref{def:semantics:second-order}) it is possible to define the second-order semantics by a translation to \ac{SOL}.
This translation is simply the normal translation to \ac{FOL}, treating concept variables as second-order unary predicates and then universally quantifying over these predicates. 
For example, $C \sqsubseteq \exists r.X \sqcap Y$ translates to $\forall X.\forall Y.[\forall x. (C(x) \rightarrow \exists y.(r(x,y) \land X(y)) \land Y(x))]$.

As discussed in \cref{sec:introduction}, second order semantics is stronger than schema semantics.
Let us consider some examples for the \ac{DL} $\EL$, which have more second order entailments than schema semantics for which all $\EL$ concepts are included in the concept base.
These examples will help us to determine the restrictions for the use of concept variables, under which both semantics coincide.

First, we give a minor modification of the example with axiom $\top\sqsubseteq\exists r.X$ discussed in \cref{sec:introduction}, which does not result in an inconsistent ontology under the second-order semantics.

\begin{example}\label{ex:not-range-restricted}
	Consider the ontology $\KB = \set{A  \sqsubseteq \exists r.X}$.
    It is easy to see that $\KB\models^2 \exists r.A\sqsubseteq A$.
    Indeed, assume that $\I\models^2\KB$, and take any valuation $\eta$ such that $\eta(X)=\emptyset$.
    Then $A^\I=A^{\I,\eta}\subseteq(\exists r.X)^{\I,\eta}=\emptyset$.
    Hence $(\exists r.A)^\I=\emptyset\subseteq A^\I$.
	At the same time $\KB \not\models^{\ast}_{\EL} \exists r.A\sqsubseteq A$. 
    Indeed, consider the interpretation $\I=(\Delta^\I,\cdot^\I)$ with
		$\Delta^{\I} = \set{a,b}$,
		$A^{\I}=\set{a}$,
        $B^\I=\set{a,b}$ for every $B\in N_C\setminus\set{A}$,
        and $r^\I=\set{\tuple{a,a},\tuple{b,a}}$ for every $r\in N_R$.
    By structural induction, it is easy to show that $a\in C^\I$ for every $\EL$ concept $C$, hence $\I\models A\sqsubseteq\exists r.C$.
    Therefore, $\I\models^{\ast}_{\EL} \KB$.
    However, since $(\exists r.A)^\I=\set{a,b}\not\subseteq\set{a} = A^\I$, we obtain $\KB \not\models^{\ast}_{\EL} \exists r.A
    \sqsubseteq A$.    
\end{example}

Example~\ref{ex:not-range-restricted} can be generalized to many other \gel axioms $C\sqsubseteq D$ for which there exists a concept variable $X$ appearing in $D$ but not in $C$.
In this case, $\I\models^2 C\sqsubseteq D$ implies that $C^{\I,\eta}=\emptyset$ for every valuation $\eta$ because for the extension $\eta'$ of $\eta$ with $\eta'(X)=\emptyset$, we obtain $C^{\I,\eta}=C^{\I,\eta'}\subseteq D^{\I,\eta'}=\emptyset$.

\begin{taggedblock}{arxiv}
\begin{lem}\label{lem:emptyEta}
	Let $\I$ be an
	interpretation, $D$ a \gel concept containing a concept variable $X$ and $\eta$ a valuation such that $\eta(X)=\emptyset$.
	Then $D^{\I, \eta} =\emptyset$.
\end{lem}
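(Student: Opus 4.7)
The plan is to proceed by structural induction on the $\gel$ concept $D$, exploiting the fact that every $\EL$ constructor is monotone in its concept arguments, so an empty subconcept propagates to make every enclosing concept empty as well.

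For the base case, since by hypothesis $X$ occurs in $D$, the only atomic possibility is $D = X$ itself; the cases $D = \top$, $D = A \in N_C$, and $D = Y \in N_X \setminus \{X\}$ are excluded because they do not contain $X$. When $D = X$, we have $D^{\I,\eta} = \eta(X) = \emptyset$ directly by assumption.

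For the inductive step, there are two constructors to consider. If $D = D_1 \sqcap D_2$, then $X$ occurs in $D_1$ or in $D_2$; applying the induction hypothesis to whichever conjunct contains $X$ gives that conjunct the empty interpretation, and since $D^{\I,\eta} = D_1^{\I,\eta} \cap D_2^{\I,\eta}$, the whole conjunction is empty. If $D = \exists r.D'$, then $X$ must occur in $D'$ (the role name $r$ contains no concept variables), so the induction hypothesis yields $D'^{\I,\eta} = \emptyset$; hence no domain element has an $r$-successor in $D'^{\I,\eta}$, and $D^{\I,\eta} = \emptyset$.

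Since $\gel$ concepts are built exclusively from concept names, concept variables, $\top$, $\sqcap$, and $\exists r.{\cdot}$, the above cases are exhaustive, completing the induction. There is no substantive obstacle: the argument works precisely because $\EL$ lacks any constructor (such as $\lnot$ or $\forall r.{\cdot}$) that could turn an empty subconcept into a non-empty superconcept, which is exactly the feature distinguishing the $\EL$ fragment from richer \ac{DL}s in this second-order setting.
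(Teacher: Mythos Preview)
Your proof is correct and essentially identical to the paper's: both proceed by structural induction on $D$, rule out the atomic cases $\top$, $A\in N_C$, $Y\neq X$ because $X$ must occur in $D$, handle $D=X$ directly via $\eta(X)=\emptyset$, and propagate emptiness through $\exists r.(\cdot)$ and $\sqcap$ using the induction hypothesis on whichever subconcept contains $X$.
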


\begin{proof}
	The proof is by structural induction over $D$. 
  Note that the cases $D=\top$, $D=A\in N_C$, and $D=Y\in N_X\setminus\set{X}$ are not possible since $D$ must contain $X$.
	\begin{itemize}
		\item If $D = X$ then $D^{\I, \eta} =\eta(X)=\emptyset$.
		\item If $D = \exists r.E$ then $E$ contains $X$. By induction hypothesis, $E^{\I,\eta}=\emptyset$.
                 Hence $D^{\I,\eta}=\emptyset$.
		\item If $D = E_1 \sqcap E_2$ then $E_1$ or $E_2$ contain $X$.
            By induction hypothesis, $E_1^{\I,\eta}=\emptyset$ or $E_2^{\I,\eta}=\emptyset$.            
            Hence $D^{\I,\eta}= E_1^{\I,\eta}\cap E_2^{\I,\eta}=\emptyset$.
            \qedhere
	\end{itemize}
\end{proof}
\end{taggedblock}

Now, take any $\I\models^2 C\sqsubseteq D$. 
Then $C^{\I,\eta}=\emptyset$ for every valuation $\eta$.
Then $\theta(C)^\I=\emptyset$ for every concept variable substitution $\theta$.
Then for every $\EL$ concept $E$ such that $\theta(C)\in\sub(E)$ we have $E^\I=\emptyset$ as well.
Thus $\set{C\sqsubseteq D}\models^2 E\sqsubseteq F$ for every $F$.
If the schema semantics preserves all these entailments then, in particular, all such concepts $E$ (containing instances of $C$) must be equivalent.
This can happen only in some trivial cases, e.g., when the $\KB$ contains axiom of the form $X\sqsubseteq Y$, which implies that all concepts are equivalent.
To ensure that the semantics coincide in non-trivial cases, it is, therefore, make sense to require that all variables that are present on the right side of a concept
inclusion axioms are also present on the left side. 
Axioms that fulfill this requirement we called \emph{range restricted} axioms.

\cref{ex:schema:conservativity} presents another situation when schema semantics gives fewer consequences than the second-order semantics.
As has been shown in this example, for an $\ELX$ axiom $\alpha$ \eqref{ax:EL} and an $\EL$ ontology $\KB$ from \eqref{eq:conservativity:KB}, we have $\KB\cup\set{\alpha}\not\models^{\ast}_{\EL} A\sqsubseteq D$, however, since $\set{\alpha}\models^{\ast}_{\ALC}\beta$ \eqref{ax:ALC} and $\KB\cup\set{\beta}\models^{\ast}_{\EL}A\sqsubseteq D$, we have $\KB\cup\set{\alpha}\models^2 A\sqsubseteq D$.

The problem with $\alpha$ in this example is that the variable $X$ occurs twice on the left side of the axiom, which makes this axiom equivalent to $\beta$ under the second-order semantics, and, consequently, being able to express an axiom with a concept disjunction $\exists r.C \sqcap \exists s.D \sqsubseteq \exists t.(C \sqcup D)$, which otherwise could not be expressed by ordinary $\EL$ axioms, i.e., under the schema semantics.
To prevent such a case for our fragment of \gel, it therefore, makes sense to prohibit the occurrence of the same variable twice on the left side of an axiom.
Concepts in which each variable occurs at most once, we call \emph{linear}.

To motivate our last restriction, consider the next example.

\begin{example}
    \label{ex:not-safe}
    Consider the \gel ontology $\KB = \set{\exists r. X  \sqsubseteq \exists s.(X \sqcap A)}$.
    It is easy to see that $\KB \models^2 \exists r.\top \sqsubseteq \exists r.A$.
    Indeed, take any $\I\models^2\KB$ and $d\in (\exists r.\top)^\I$.
    Then there exists $d'\in\Delta^\I$ such that $\tuple{d,d'}\in r^\I$.
    Take any valuation $\eta$ with $\eta(X)=\set{d'}$.
    Since $\tuple{d,d'}\in r^\I$, we have $d\in (\exists r.X)^{\I,\eta}$.
    Since $\I\models^2_\eta\KB$, we have $d\in (\exists s.(X\sqcap A))^{\I,\eta}$.
    In particular, $\emptyset\neq (X\sqcap A)^{\I,\eta}=\set{d'}\cap A^\I$.
    Hence $d'\in A^\I$. 
    Consequently, $d\in (\exists r.A)^\I$.
    
    On the other hand, $\KB \not\models^{\ast}_{\EL} \exists r.\top \sqsubseteq \exists r.A$, as evidenced by the counter-model $\I=(\Delta^\I,\cdot^\I)$ with
    $\Delta^{\I}  = \set{a,b}$,
    $A^{\I}  = \set{a}$,
    $r^{\I}  = \set{\tuple{a, b}}$,
    $s^{\I}  = \set{\tuple{a, a}}$,
    and $E^\I=\emptyset$, $h^\I=\emptyset$ for any remaining $E\in N_C$ and $h\in N_R$.
    To show that $\I\models^{\ast}_\EL\KB$, we prove that $\I\models \exists r.C\sqsubseteq\exists s.(C\sqsubseteq A)$ for every $\EL$ concept $C$.
    For this, take any $d\in (\exists r.C)^\I$.
    By definition of $r^{\I}$, $d=a$ and $b\in C^\I$.
    Then $C$ can be only a conjunction of concept $\top$.
    Hence $a\in C$.
    Hence $d=a\in(\exists s.C)^\I$.
    Thus $\I\models\KB$.
    Since, $a\in(\exists r.\top)^\I$ but $(\exists r.A)^\I=\emptyset$, we proved that $\KB \not\models^{\ast}_{\EL} \exists r.\top \sqsubseteq \exists r.A$.
\end{example}

Note that under the second order semantics, the axiom $\exists r.X\sqsubseteq \exists s.(X\sqcap A)$ in $\KB$ from \cref{ex:not-safe} implies two properties: (1) that $r$ is a \emph{subrole} of $s$ ($r \sqsubseteq s$), which is equivalent to axiom $\exists r.X \sqsubseteq \exists s.X$, and (2) that $A$ is a \emph{range} of the role $r$ ($ran(r) \sqsubseteq A$), which is due to the fact that for any element $d'$ such that $\tuple{d,d'}\in r^\I$ this axiom holds for $X=\set{d'}$.
As was shown in the example, the schema semantics cannot capture the second kind of properties.
In fact, an extension of $\EL$ with both (complex) role inclusions and range restrictions becomes undecidable \cite{baader2008pushing}), which could explain why the schema semantics cannot characterize consequences in this extension.
To prevent situations like in \cref{ex:not-safe}, we require that variables in the right side of axioms appear only directly under existential restrictions.
We generalize a related notion of \emph{safe nominals} \cite{kazakov2012practical} to define this restriction:

\begin{defn}[Safe Concept]\label{def:safeConcepts}\index{safe!concept}
	A \gel concept $C$ is called \emph{safe} (for concept variables), if variables only occur in the
	form of $\exists r.X$, i.e.\ safe concepts are defined by the grammar:
	\[
		C_s^{(i)} = A \mid \top \mid \exists r.X \mid \exists r. C_s \mid C^{1}_s \sqcap C^{2}_s
	\]
\end{defn}

%%==============================================================================%%

\section{When Semantics Coincide}
\label{sec:coincide}

In this section we prove that the restrictions on the use of concept variables discussed in \cref{sec:semantics} are sufficient to guarantee that the logical consequences under the schema semantics and second-order semantics coincide.
Towards this goal, we define a fragment \gelo of $\ELX$ that satisfies these restrictions:

\begin{defn}[\gelo]\label{def:sensibleGel}
	A \gel axiom $\beta = C \sqsubseteq D$ is in the fragment \gelo, if
	\begin{itemize}
        \item $\beta$ is range restricted, i.e.\ $\vars(D)\subseteq\vars(C)$,
        \item $C$ is linear, i.e.\ $C$ it does not contain a variable twice, and
		\item $D$ is safe (cf. \cref{def:safeConcepts}).
	\end{itemize}
\end{defn}

An interesting consequence of the first two restrictions in \cref{def:sensibleGel}, is the so-called \emph{singleton property} for valuations.
Intuitively, for checking entailment over the second-order semantics, it is sufficient to consider only valuations that assign concept variables to singleton subsets of the domain.

\begin{defn}[Singleton Valuation]
    A valuation $\eta$ is a \emph{singleton
        valuation}\index{singleton!valuation} if
        $\forall X: \sizeof{\eta(X)} = 1$.
\end{defn}

\begin{lem}\label{lem:shrinkEta}
    Let $\I$ be an interpretation, $C$ a linear \gel concept, $\eta$ a valuation, and $a \in C^{\I, \eta}$. 
    Then there is a singleton valuation $\eta'$ such that $\eta'(X)\subseteq \eta(X)$ for every $X\in\vars(C)$ and $a \in C^{\I, \eta'}$.
\end{lem}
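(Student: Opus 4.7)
The plan is to prove the lemma by structural induction on the linear concept $C$, constructing the singleton valuation $\eta'$ bottom-up from the derivation that $a \in C^{\I,\eta}$. The main leverage point is that linearity of $C$ guarantees that in any conjunction $C_1 \sqcap C_2$ the variable sets $\vars(C_1)$ and $\vars(C_2)$ are disjoint, so singleton choices made independently for the two conjuncts never conflict. For variables not occurring in $C$, the value of $\eta'$ on them is irrelevant to $C^{\I,\eta'}$, so I will simply fix $\eta'(X) := \{d_0\}$ for some arbitrary $d_0 \in \Delta^\I$ on all such $X$ to ensure $\eta'$ is a genuine singleton valuation.

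The base cases are routine. If $C = A \in N_C$ or $C = \top$, then $\vars(C) = \emptyset$ and $C^{\I,\eta'} = C^{\I,\eta}$ regardless of $\eta'$, so any singleton valuation works. If $C = X \in N_X$, then $a \in \eta(X)$, and I set $\eta'(X) := \{a\}$; clearly $\eta'(X) \subseteq \eta(X)$ and $a \in X^{\I,\eta'}$.

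For the inductive step with $C = \exists r.D$, pick a witness $b \in \Delta^\I$ with $\tuple{a,b} \in r^\I$ and $b \in D^{\I,\eta}$. Since $D$ inherits linearity from $C$ and $\vars(D) = \vars(C)$, the induction hypothesis applied to $b \in D^{\I,\eta}$ yields a singleton valuation $\eta'$ with $\eta'(X) \subseteq \eta(X)$ for $X \in \vars(D)$ and $b \in D^{\I,\eta'}$, from which $a \in (\exists r.D)^{\I,\eta'}$ follows immediately. For $C = C_1 \sqcap C_2$, apply the induction hypothesis separately to $a \in C_1^{\I,\eta}$ and $a \in C_2^{\I,\eta}$, obtaining singleton valuations $\eta_1$ and $\eta_2$. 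Define $\eta'(X) := \eta_i(X)$ whenever $X \in \vars(C_i)$ (unambiguously, by linearity) and $\eta'(X) := \{d_0\}$ otherwise. Since the interpretation of $C_i$ depends only on the values of the valuation on $\vars(C_i)$, we have $a \in C_i^{\I,\eta'}$ for $i=1,2$, hence $a \in C^{\I,\eta'}$.

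The only genuinely delicate step is the conjunction case, and the whole point of the lemma is precisely that this step goes through because of linearity: without it, the two sub-valuations $\eta_1$ and $\eta_2$ could pick different singletons for a shared variable, and there would be no way to reconcile them while keeping the witnesses in both conjuncts. Everything else is a straightforward push-through of structural induction, and the hypothesis $a \in C^{\I,\eta}$ already implies that $\eta(X)$ is nonempty for every $X \in \vars(C)$ (which is exactly what is needed to extract a singleton subset of $\eta(X)$), essentially by the contrapositive of \cref{lem:emptyEta}.
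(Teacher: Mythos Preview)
Your proposal is correct and follows essentially the same structural-induction argument as the paper's proof. The only cosmetic difference is how the merged valuation is defined outside $\vars(C_1)\cup\vars(C_2)$ in the conjunction case: the paper reuses $\eta_2'$ on all variables not in $\vars(C_1)$, whereas you default to an arbitrary $\{d_0\}$; both choices work for the same reason you state, namely that $C_i^{\I,\eta'}$ depends only on $\eta'\restriction\vars(C_i)$.
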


\begin{taggedblock}{arxiv}
\begin{proof}
		Proof by induction over the structure of $C$:
		\begin{itemize}
			\item If $\vars(C)=\emptyset$ then every singleton valuation $\eta'$ works since $a\in C^{\I, \eta'}=C^{\I,\eta}$.
			\item If $C = X$ then $a \in X^{\I, \eta} = \eta(X)$ let
			      $\eta'(X) = \{a\}$ then $\sizeof{\eta'(X)} = 1$, $\eta'(X) \subseteq
				      \eta(X)$ and $a \in \eta'(X) = X^{\I, \eta'}$.
			\item If $C = \exists r.E$ then $a \in (\exists r.E)^{\I, \eta}$ and there exists $b \in \Delta^{\I}$ such that $\tuple{a,b} \in r^{\I}$ and $b \in E^{\I,\eta}$.
                    By induction hypothesis, there exists a singleton $\eta'$ such that $\eta'(X)\subseteq \eta(X)$ for every $X \in \vars(E)$ such that $b \in E^{\I, \eta'}$.
                    Then $\vars(C)=\vars(E)$ and $a \in (\exists r.E)^{\I, \eta'}$.
			\item If $C = E_1 \sqcap E_2$ then $a \in (E_1 \sqcap E_2)^{\I,
				      \eta}=E_1^{\I, \eta}\cap E_2^{\I, \eta}$. 
                 By induction hypothesis for $a\in E_1^{\I, \eta}$, there exists a singleton $\eta_1'$ such that $\eta_1'(X) \subseteq \eta(X)$ for $X\in\vars(E_1)$ and $a\in E_1^{\I, \eta_1'}$.
                 Likewise, for $a\in E_2^{\I, \eta}$, there exists a singleton $\eta_2'$ such that $\eta_2'(X) \subseteq \eta(X)$ for $X\in\vars(E_2)$ and $a\in E_2^{\I, \eta_2'}$.
                 Define $\eta'(X):=\eta_1'(X)$ for $X\in\vars(E_1)$ and $\eta'(X):=\eta_2'(X)$ for $X\notin\vars(E_1)$.
                 Obviously, $\eta'$ is singleton, $\eta'(X)\subseteq\eta(X)$ for $X\in\vars(C)=\vars(E_1)\cup\vars(E_2)$, and $E_1^{\I,\eta'}=E_1^{\I,\eta_1'}$.
                 Since $C$ is linear, $\vars(E_1)\cap\vars(E_2)=\emptyset$, hence $\eta'(X)=\eta_2'(X)$ for $X\in\vars(E_2)$.
                 Hence $E_2^{\I,\eta'}=E_2^{\I,\eta_2'}$.
                 Finally, since $a\in E_1^{\I, \eta_1'}=E_1^{\I, \eta'}$ and $a\in E_2^{\I, \eta_2'}=E_2^{\I, \eta'}$, we obtain $a\in C^{\I,\eta'}$ as required.
                 \qedhere
		\end{itemize}
\end{proof}
\end{taggedblock}

\begin{lem}
    \label{lem:growEta}
    Let $\I$ be an
	interpretation, $D$ an \gel concept, and $\eta'$, $\eta$ valuations such that $\eta'(X)\subseteq\eta(X)$ for every $X\in\vars(D)$.
    Then $D^{\I, \eta'}\subseteq D^{\I, \eta}$.
\end{lem}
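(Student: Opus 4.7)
The plan is to prove the inclusion by structural induction on the \gel concept $D$, exploiting the fact that \gel (unlike, say, \ALCX) contains no negative constructors, so every concept-forming operation is monotone in the interpretations of its variables.

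For the base cases, if $D = \top$ or $D = A \in N_C$, both interpretations coincide since they do not depend on the valuation at all. If $D = X \in N_X$, then $X \in \vars(D)$, so the hypothesis gives $D^{\I,\eta'} = \eta'(X) \subseteq \eta(X) = D^{\I,\eta}$ directly.

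For the inductive cases, the key observation is that for any subconcept $E$ of $D$ we have $\vars(E) \subseteq \vars(D)$, so the hypothesis $\eta'(X) \subseteq \eta(X)$ for every $X \in \vars(D)$ carries over to every subconcept, letting us invoke the induction hypothesis. If $D = \exists r.E$, applying the induction hypothesis to $E$ gives $E^{\I,\eta'} \subseteq E^{\I,\eta}$, and then monotonicity of $\exists r$ in its filler yields $(\exists r.E)^{\I,\eta'} \subseteq (\exists r.E)^{\I,\eta}$. If $D = E_1 \sqcap E_2$, applying the induction hypothesis to each conjunct gives $E_i^{\I,\eta'} \subseteq E_i^{\I,\eta}$ for $i = 1,2$, and intersection is monotone in both arguments.

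There is no real obstacle here; the statement is essentially the monotonicity of positive concepts, and the proof is routine. The only mildly delicate point is that the hypothesis only constrains $\eta'$ and $\eta$ on $\vars(D)$, not on all variables. This is why the induction must be phrased in the form stated (with $\vars(D)$), and why one must verify at each inductive step that $\vars$ of the immediate subconcepts is contained in $\vars(D)$ before appealing to the induction hypothesis. This containment is immediate from the definition of $\vars$, so the induction goes through smoothly.
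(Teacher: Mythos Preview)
Your proposal is correct and follows essentially the same structural-induction argument as the paper. The only cosmetic difference is that the paper collapses the ground base cases into a single ``$\vars(D)=\emptyset$'' clause, whereas you treat $\top$ and $A\in N_C$ separately; you are also a bit more explicit than the paper about why $\vars(E)\subseteq\vars(D)$ is needed before invoking the induction hypothesis on a subconcept $E$.
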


\begin{taggedblock}{arxiv}
\begin{proof}
		Proof by induction over the structure of $D$:
		\begin{itemize}
			\item If $\vars(D)=\emptyset$, then, trivially, $D^{\I,\eta'}=D^{\I,\eta}$.
			\item If $D = X$ then $X^{\I, \eta'} = \eta'(X)\subseteq\eta(X)=X^{\I, \eta}$.
			\item If $D = \exists r.E$ then $E^{\I, \eta'}\subseteq E^{\I,\eta}$ by the induction hypothesis.
          Then for any $a\in (\exists r.E)^{\I, \eta'}$ there exists $b\in E^{\I, \eta'}\subseteq E^{\I,\eta}$ such that $\tuple{a,b}\in r^\I$.
          Hence $a\in (\exists r.E)^{\I, \eta}$.
			\item If $C = E_1 \sqcap E_2$ then $E_1^{\I, \eta'}\subseteq E_1^{\I,\eta}$ and $E_2^{\I, \eta'}\subseteq E_2^{\I,\eta}$ by the induction hypothesis.
            Hence $(E_1 \sqcap E_2)^{\I,\eta'}=E_1^{\I, \eta'}\cap E_2^{\I, \eta'}\subseteq E_1^{\I, \eta}\cap E_2^{\I, \eta}=(E_1 \sqcap E_2)^{\I,\eta}$.
            \qedhere
		\end{itemize}
\end{proof}
\end{taggedblock}

By combining the above two lemmata, we obtain the required singleton property:

\begin{thm}[Singleton Property]\label{thm:singletonModel}
	Let $\I$ be an interpretation, $\beta = C \sqsubseteq D$ a range restricted \gel axiom such that $C$ is linear, and $\I\models^2_{\eta'}\beta$ for every singleton valuation $\eta'$.
    Then $\I\models^2\beta$.
\end{thm}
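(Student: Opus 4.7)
The plan is to unfold the definition of $\I \models^2 \beta$ and chain together the two preceding lemmata. Concretely, fix an arbitrary valuation $\eta$; I want to establish $C^{\I,\eta} \subseteq D^{\I,\eta}$. Pick an arbitrary witness $a \in C^{\I,\eta}$ and aim to derive $a \in D^{\I,\eta}$.

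The first step uses linearity of $C$ together with \cref{lem:shrinkEta}: from $a \in C^{\I,\eta}$ I obtain a singleton valuation $\eta'$ with $\eta'(X) \subseteq \eta(X)$ for every $X \in \vars(C)$ and with $a \in C^{\I,\eta'}$. (On variables outside $\vars(C)$, the values of $\eta'$ are irrelevant, since $\eta'$ is already a singleton valuation by construction of the lemma.) The hypothesis of the theorem, applied to this singleton $\eta'$, gives $\I \models^2_{\eta'} \beta$, and combining with $a \in C^{\I,\eta'}$ yields $a \in D^{\I,\eta'}$.

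The second step uses range-restrictedness to push this back up to $\eta$. Because $\vars(D) \subseteq \vars(C)$, the inclusion $\eta'(X) \subseteq \eta(X)$ secured in the previous step holds in particular for every $X \in \vars(D)$. This is exactly the hypothesis needed for \cref{lem:growEta}, which gives $D^{\I,\eta'} \subseteq D^{\I,\eta}$, hence $a \in D^{\I,\eta}$. Since $a$ and $\eta$ were arbitrary, $\I \models^2_{\eta} \beta$ for every $\eta$, i.e., $\I \models^2 \beta$.

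There is no real obstacle — the work has been pre-packaged into the two lemmata. The only subtle point to flag explicitly is the role of each syntactic restriction: linearity of $C$ is what makes the shrinking step coherent (two occurrences of the same variable in $C$ could demand conflicting singleton values), while range-restrictedness is precisely what guarantees that the singleton valuation produced from the left-hand side is small enough on all variables appearing in $D$ to apply the monotonicity lemma. Without either restriction one of the two inclusions breaks, matching the counterexamples in \cref{ex:schema:conservativity} and \cref{ex:not-range-restricted}.
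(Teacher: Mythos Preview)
Your proof is correct and follows essentially the same route as the paper: fix $\eta$ and $a\in C^{\I,\eta}$, apply \cref{lem:shrinkEta} to obtain a singleton $\eta'$, invoke the hypothesis to get $a\in D^{\I,\eta'}$, then use range-restrictedness together with \cref{lem:growEta} to conclude $a\in D^{\I,\eta}$. (The paper's proof even contains a small typo, citing \cref{lem:emptyEta} where it clearly means \cref{lem:shrinkEta}; your citation is the right one.)
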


\begin{taggedblock}{arxiv}
\begin{proof}
    Take any valuation $\eta$.
    We need to prove that $\I\models^2_\eta \beta$, i.e., $C^{\I,\eta}\subseteq D^{\I,\eta}$.
	Take any $a \in C^{\I, \eta}$. 
    Then, by \cref{lem:emptyEta}, there is a singleton $\eta'$ with $\eta'(X)\subseteq \eta(X)$ for every $X \in \vars(C)$ such that $a \in C^{\I, \eta'}$.
    Then $\I\models^2_{\eta'}\beta$ by the condition of the theorem, so $a \in C^{\I, \eta'}\subseteq D^{\I, \eta'}$.
	Since $\beta$ is range restricted, we have $\eta'(X)\subseteq \eta(X)$ for every $X \in \vars(D)\subseteq\vars(C)$.
    Then $a\in D^{\I, \eta'}\subseteq D^{\I, \eta}$ by \cref{lem:growEta}.    
	Since $a$ was chosen arbitrarily, this proves that $\I \models^{2}_{\eta} \beta$.
\end{proof}
\end{taggedblock}

Our next goal is to prove that if $\KB\models^2\alpha$ for some \gelo ontology $\KB$ and a (ground) $\EL$ axiom $\alpha$, then $\KB\models^{\ast}_H\alpha$ for some concept base $H$ consisting of (ground) $\EL$ concepts, i.e., $\KB_{\downarrow H}\models\alpha$ (cf. \cref{def:semantics:schema}).
Since $\KB_{\downarrow H}$ is an ordinary $\EL$ ontology, we can use the well known characterization of $\EL$ entailment using so-called $\EL$ \emph{canonical models} (also called \emph{completion graphs} \cite{baader2005pushing}).
Usually canonical models are defined using consequences of the ontology derived by certain inference rules (see, e.g., \cite{kazakov2014incredible}), however the following simplified definition is sufficient for our purpose.
To avoid confusion with $\ELX$ ontologies, we denote $\EL$ ontologies by $\KBG$.

\begin{defn}[Canonical Interpretation]
    \label{def:canonicalModel}
    Let $H$ be a nonempty \emph{concept base} and $\KBG$ an $\EL$ ontology.
    The \emph{canonical interpretation} (w.r.t. $\KBG$ and $H$) is an interpretation $\I=\I(\KBG,H)=(\Delta^\I,\cdot^\I)$ defined by:
		$\Delta^\I  = \set{x_C \mid C \in H}$,
		$A^\I = \set{x_C \mid \KBG \models C \sqsubseteq A}$ for $A\in N_C$, and
		$r^\I = \set{\tuple{x_C, x_D}\in\Delta^\I\times\Delta^\I \mid \KBG \models C \sqsubseteq \exists r.D}$ for $r\in N_R$.	
\end{defn}

If we include all $\EL$ concepts into the concept base $H$, it can be shown that the canonical interpretation $\I$ is a model of $\KBG$ that satisfies exactly $\EL$ axioms entailed by $\KBG$: $\I\models\alpha$ iff $\KBG\models\alpha$.
However, to ensure the latter property for a \emph{fixed} $\alpha$, it is sufficient to use a smaller concept base $H$ that contains only certain sub-concepts of $\KBG$ and $\alpha$.
It turns out that if $H$ satisfies certain conditions for $\KBG=\KB_{\downarrow H}$ and $\alpha$ then $\KB\models^2\alpha$ implies $\I\models\alpha$, which implies $\KBG\models\alpha$.
To prove that $\I\models\alpha$, we show that $\I\models^2\KB$, for which we use the singleton valuation property from  \cref{thm:singletonModel}.
There is a nice connection between singleton valuations in $\I$ and substitutions of concept variables with concepts from $H$, which we are going to exploit.

\begin{defn}[Canonical Substitution]
    \label{def:canonical:substitution}
    We say that a concept variable substitution $\theta$ is \emph{canonical}\index{substitution!canonical} for a concept base $H$, if $\theta(X)\in H$ for every $X\in N_X$.
    Given a canonical interpretation $\I$ (for $\KBG$ and $H$) and a singleton valuation $\eta$ for $\I$, the \emph{canonical substitution determined} by $\eta$ is the substitution $\theta=\theta_{\eta}$ defined by $\theta(X) = C$ iff $\eta(X) = \set{x_C}\subseteq\Delta^\I$.
\end{defn}

The connection to canonical substitutions in \cref{def:canonical:substitution} helps us to characterize interpretations of (complex) $\ELX$ concepts under canonical models and singleton valuations.

\begin{lem}
	\label{lem:singletonInSaturation}
    Let $\I$ be the canonical interpretation w.r.t. some $\EL$ ontology $\KBG$ and a concept base $H$, $F$ an $\ELX$ concept, $\eta$ a singleton valuation, and $\theta = \theta_{\eta}$ the corresponding canonical substitution. 
    Then: 
    \begin{equation}
     F^{\I,\eta}\subseteq\set{x_D\in\Delta^\I\mid \KBG\models D\sqsubseteq\theta(F)}.   
    \end{equation}    
    In addition, if $F$ is a safe concpet (see Definition~\ref{def:safeConcepts}) and 
    $\theta(D)\in H$ for every $\exists r.D\in\sub(F)$ then
    \begin{equation}
     F^{\I,\eta}\supseteq\set{x_D\in\Delta^\I\mid \KBG\models D\sqsubseteq\theta(F)}.   
    \end{equation}    
\end{lem}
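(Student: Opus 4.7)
The plan is to prove both inclusions simultaneously by structural induction on $F$, exploiting the fact that the canonical substitution $\theta$ commutes with the $\ELX$ concept constructors: $\theta(\exists r.E) = \exists r.\theta(E)$ and $\theta(E_1 \sqcap E_2) = \theta(E_1) \sqcap \theta(E_2)$, and that by construction $\theta(X)$ is exactly the concept $C \in H$ such that $\eta(X) = \set{x_C}$.

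For the first inclusion I would go case by case. If $F = \top$ or $F = A \in N_C$, the claim reduces immediately to the definitions of $\Delta^\I$ and $A^\I$ in \cref{def:canonicalModel}. If $F = X \in N_X$, then $F^{\I,\eta} = \eta(X) = \set{x_{\theta(X)}}$, and $\KBG \models \theta(X) \sqsubseteq \theta(X)$ trivially. If $F = \exists r.E$ and $x_D \in F^{\I,\eta}$, pick a witness $x_{D'}$ with $\tuple{x_D, x_{D'}} \in r^\I$ and $x_{D'} \in E^{\I,\eta}$; by definition of $r^\I$ we have $\KBG \models D \sqsubseteq \exists r.D'$, and by the induction hypothesis $\KBG \models D' \sqsubseteq \theta(E)$, so monotonicity of existential restriction yields $\KBG \models D \sqsubseteq \exists r.\theta(E) = \theta(F)$. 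The conjunction case follows from applying the induction hypothesis to both conjuncts and taking the meet of the entailed subsumptions.

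For the second inclusion the form of $F$ is restricted by the safety grammar of \cref{def:safeConcepts}. The cases $F = \top$ and $F = A$ are handled as above. The distinctive safe subcase is $F = \exists r.X$: given $\KBG \models D \sqsubseteq \exists r.\theta(X)$ and the hypothesis $\theta(X) \in H$, the element $x_{\theta(X)}$ lies in $\Delta^\I$, $\tuple{x_D, x_{\theta(X)}} \in r^\I$ by definition of $r^\I$, and $x_{\theta(X)} \in \eta(X) = X^{\I,\eta}$, so $x_D \in F^{\I,\eta}$. For $F = \exists r.E$ with $E$ a safe subconcept, the hypothesis gives $\theta(E) \in H$, hence $x_{\theta(E)} \in \Delta^\I$ and $\tuple{x_D, x_{\theta(E)}} \in r^\I$; then applying the inductive hypothesis to $E$ at the element $x_{\theta(E)}$ (noting $\KBG \models \theta(E) \sqsubseteq \theta(E)$, and that the safety hypotheses for $E$ are inherited from those of $F$ because $\sub(E) \subseteq \sub(F)$) yields $x_{\theta(E)} \in E^{\I,\eta}$. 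Conjunction is again straightforward from the induction hypothesis on both conjuncts.

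The main obstacle is the existential case of the second inclusion: one must supply a successor element in $\Delta^\I$, and in the canonical interpretation this requires the target concept to lie in the concept base $H$. The safety condition ensures that inside every existential restriction we have either a bare variable $X$, for which $\theta(X) \in H$ is already supplied by the singleton valuation, or a recursive safe subconcept $E$, for which $\theta(E) \in H$ by hypothesis; in both cases the needed witness sits in the domain. This is the structural reason why safety is indispensable: without it, the successor concept $\theta(E)$ produced by $\KBG \models D \sqsubseteq \exists r.\theta(E)$ need not correspond to any domain element of $\I$, and no witness for $E^{\I,\eta}$ can be guaranteed — essentially the failure mode illustrated in \cref{ex:not-safe}.
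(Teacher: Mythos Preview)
Your proposal is correct and follows essentially the same structural induction as the paper's own proof, including the key maneuver of singling out the base case $F=\exists r.X$ for the second inclusion (since $X$ alone is not safe and so falls outside the inductive hypothesis) and using $\theta(E)\in H$ to produce the witness $x_{\theta(E)}$ in the general existential case. The only minor inaccuracy is the closing reference to \cref{ex:not-safe}: that example concerns the divergence of the two semantics, whereas the failure of the second inclusion for non-safe $F$ is illustrated in the paper by a separate, simpler example built directly on the canonical interpretation.
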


\begin{taggedblock}{arxiv}
\begin{proof}
	Proof by induction over the definition of (safe) $\EL$ concept $F$:
	\begin{itemize}
		\item $F = \top$. Then $\theta(F)=\top$ and $\top^{\I,\eta}=\top^\I=\Delta^\I=\set{x_D\in\Delta^\I\mid\KBG\models D\sqsubseteq\top}$.
		\item $F = A\in N_C$. Then $\theta(F)=A$ and $A^{\I,\eta}=A^\I=\set{x_D\in\Delta^\I\mid\KBG\models D\sqsubseteq A}$ by definition of $A^\I$.  
		\item $F = X\in N_X$ and $\eta(X)=\set{x_C}$. Then $\theta(F)=C$ and $F^{\I,\eta}=\eta(X)=\set{x_C}\subseteq\set{x_D\in\Delta^\I\mid \KBG\models D\sqsubseteq C}$.
        Since $F=X$ is not safe, the converse direction does not apply.
        \item $F = \exists r.X$ and $\eta(X)=\set{x_C}$. Then $\theta(F)=\exists r.C$ and $F^{\I,\eta}=\set{x_D\mid \tuple{x_D,x_C}\in r^\I}= \set{x_D\mid\KBG\models D\sqsubseteq \exists r.C}$ by definition of $r^\I$.        
		\item $F = \exists s.G$. Then $\theta(F)=\exists s.\theta(G)$ and $F^{\I,\eta}=\set{x_D\mid \tuple{x_D,x_C}\in s^\I\:\&\: x_C\in G^{\I,\eta}}\subseteq
          \set{x_D\mid \KBG\models D\sqsubseteq\exists s.C\:\&\: \KBG\models C\sqsubseteq \theta(G)}\subseteq \set{x_D\mid \KBG\models D\sqsubseteq\exists s.\theta(G)}$ by definition of $s^\I$, induction hypothesis for $G$, and basic entailment properties.
  
         If $F$ is safe and $\theta(D)\in H$ for every $\exists r.D\in\sub(\theta(F))$ then either $G=X$ (the previous case) or $G$ is safe, in which case, by induction hypothesis,
         $G^{\I,\eta}\supseteq\set{x_C\mid \KBG\models C\sqsubseteq \theta(G)}$.
         Since $\theta(G)\in H$ and $\KBG\models\theta(G)\sqsubseteq\theta(G)$, this implies that $x_{\theta(G)}\in G^{\I,\eta}$. 
         Then $F^{\I,\eta}=\set{x_D\mid \tuple{x_D,x_C}\in s^\I\:\&\: x_C\in G^{\I,\eta}}\supseteq \set{x_D\mid \tuple{x_D,x_{\theta(G)}}\in s^\I}=\set{x_D\mid \KBG\models D\sqsubseteq\exists s.\theta(G)}=\set{x_D\mid \KBG\models D\sqsubseteq \theta(F)}$.
          
		\item $F = F_1 \sqcap F_2$. Then $\theta(F)=\theta(F_1)\sqcap\theta(F_2)$ and $F^{\I,\eta}=F_1^{\I,\eta}\cap F_2^{\I,\eta}\subseteq\set{x_D\in\Delta^\I\mid \KBG\models 
        D\sqsubseteq\theta(F_1)}\cap\set{x_D\in\Delta^\I\mid \KBG\models D\sqsubseteq\theta(F_2)}=\set{x_D\in\Delta^\I\mid \KBG\models D\sqsubseteq\theta(F_1)\sqcap\theta(F_2)}$
        by induction hypothesis and basic entailment properties.

        If $F$ is safe and $\theta(D)\in H$ for every $\exists r.D\in\sub(\theta(F))$ then so are $F_1$ and $F_2$, thus $F^{\I,\eta}=F_1^{\I,\eta}\cap F_2^{\I,\eta}\supseteq \set{x_D\in\Delta^\I\mid \KBG\models D\sqsubseteq\theta(F_1)}\cap\set{x_D\in\Delta^\I\mid \KBG\models D\sqsubseteq\theta(F_2)}=\set{x_D\in\Delta^\I\mid \KBG\models D\sqsubseteq\theta(F_1)\sqcap\theta(F_2)}$
        by induction hypothesis.  
         \qedhere
	\end{itemize}
\end{proof}
\end{taggedblock}

The restriction to safe concepts in \cref{lem:singletonInSaturation} is necessary, as the following example shows:

\begin{example}
	Consider $H=\set{A,B}\subseteq N_C$ and $\KBG = \set{A \sqsubseteq B}$.
    Then the canonical interpretation $\I$ for $\KBG$ and $H$ has domain $\Delta^\I=\set{x_A,x_B}$ and assigns $A^\I=\set{x_A,x_B}$, $B^\I=\set{x_B}$.
    Now take $F=X\in N_X$ and define a singleton $\eta(X)=\set{x_B}$.
    Then the corresponding canonical substitution $\theta(X)=B$.
    As can be seen, $F^{\I,\eta}=\eta(X)=\set{x_B}\subsetneq\set{x_A,x_B}=\set{x_D\in\Delta^\I\mid \KBG\models D\sqsubseteq B}$.
\end{example}

We are now ready to harvest the fruits of our characterization of canonical interpretations.
We first show that, the canonical interpretation cannot entail more axioms than the ontology $\KBG$ for which it is constructed if the concept base $H$ contains relevant concepts from these axioms.

\begin{coro}
    \label{cor:canonical:entail}
    Let $\I$ be the canonical interpretation w.r.t. $\KBG$ and $H$,
    and $\alpha=F\sqsubseteq G$ an $\EL$ axiom such that $F\in H$ and $D\in H$ for every $\exists r.D\in\sub(F)$.
    Then $\I\models\alpha$ implies $\KBG\models\alpha$.
\end{coro}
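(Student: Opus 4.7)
The plan is to derive the conclusion as a direct application of both directions of \cref{lem:singletonInSaturation} to the two sides of $\alpha=F\sqsubseteq G$, using the fact that $F$ and $G$ are ground (so they are trivially safe, and any valuation $\eta$ together with any corresponding substitution $\theta$ acts as the identity, giving $F^{\I,\eta}=F^\I$, $G^{\I,\eta}=G^\I$, $\theta(F)=F$ and $\theta(G)=G$).

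First I would use the hypothesis $F\in H$ together with the side condition that $D\in H$ for every $\exists r.D\in\sub(F)$ to invoke the $\supseteq$-direction of \cref{lem:singletonInSaturation} applied to the concept $F$. This yields $F^\I\supseteq\set{x_D\in\Delta^\I\mid\KBG\models D\sqsubseteq F}$. Since $\KBG\models F\sqsubseteq F$ trivially and $x_F\in\Delta^\I$ by $F\in H$, this gives $x_F\in F^\I$.

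Next, from the assumption $\I\models\alpha$, i.e.\ $F^\I\subseteq G^\I$, I would conclude $x_F\in G^\I$. Then I would apply the $\subseteq$-direction of \cref{lem:singletonInSaturation} to $G$, which holds for \emph{arbitrary} (not necessarily safe) ground concepts with no extra side condition, obtaining $G^\I\subseteq\set{x_D\in\Delta^\I\mid\KBG\models D\sqsubseteq G}$. Combining these, $x_F\in G^\I$ gives $\KBG\models F\sqsubseteq G$, which is exactly $\KBG\models\alpha$.

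There is essentially no obstacle here; the only care needed is in verifying that \cref{lem:singletonInSaturation} is indeed applicable to ground $\EL$ concepts despite being phrased for $\ELX$ concepts under a singleton valuation. This is purely a bookkeeping check: picking any fixed singleton valuation $\eta$ (its choice is immaterial on ground input), the induced canonical substitution $\theta_\eta$ satisfies $\theta_\eta(F)=F$ and $\theta_\eta(G)=G$, so the side condition ``$\theta(D)\in H$ for every $\exists r.D\in\sub(F)$'' reduces exactly to the hypothesis of the corollary, and no condition on $G$ is needed because only the $\subseteq$-direction is used for $G$.
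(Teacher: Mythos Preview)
Your proposal is correct and follows essentially the same approach as the paper's proof: both apply the $\supseteq$-direction of \cref{lem:singletonInSaturation} to $F$ to get $x_F\in F^\I$, use $\I\models\alpha$ to get $x_F\in G^\I$, and then apply the $\subseteq$-direction to $G$ to conclude $\KBG\models F\sqsubseteq G$. Your additional remark explaining why the lemma (stated for $\ELX$ concepts under a singleton valuation) applies to ground $F$ and $G$ is a point the paper leaves implicit, but it does not alter the argument.
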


\begin{proof}    
    By \cref{lem:singletonInSaturation},
    $G^\I\subseteq\set{x_D\in\Delta^\I\mid \KBG\models D\sqsubseteq G}$ and
    $F^\I\supseteq\set{x_D\in\Delta^\I\mid \KBG\models D\sqsubseteq F}$
    Since $\KBG\models F\sqsubseteq F$ and $F\in H$, we have $x_F\in F^\I$.
    Since $\I\models F\sqsubseteq G$, we have $x_F\in G^\I\subseteq\set{x_D\in\Delta^\I\mid \KBG\models D\sqsubseteq G}$.
    Hence $\KBG\models F\sqsubseteq G$.    
\end{proof}

Next, we determine what to put into $H$ and $\KBG$ so that the canonical interpretation $\I=\I(G,H)$ satisfies a given $\ELX$ axiom $\beta$ (to eventually ensure that $\I\models\KB$).

\begin{coro}
    \label{cor:canonical:model}
    Let $\I$ be the canonical interpretation w.r.t. $\KBG$ and $H$, 
    and $\beta=F\sqsubseteq G$ an \gelo axiom such that 1) $\beta_{\downarrow H}\in\KBG$ and 2) $D_{\downarrow H}\subseteq H$ for every $\exists r.D\in\sub(G)$.
    Then $\I\models^2 \beta$.
\end{coro}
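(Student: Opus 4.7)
The plan is to invoke the Singleton Property (\cref{thm:singletonModel}) so that it suffices to check $\I\models^2_{\eta'}\beta$ for an arbitrary singleton valuation $\eta'$. This reduction is licensed because an \gelo axiom is by definition range restricted with linear left-hand side, matching the hypotheses of \cref{thm:singletonModel}.

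Fix a singleton $\eta'$ and let $\theta = \theta_{\eta'}$ be the canonical substitution from \cref{def:canonical:substitution}; by construction $\theta$ is canonical for $H$. Take an arbitrary $a \in F^{\I,\eta'}$; since $\Delta^\I = \{x_C \mid C \in H\}$, we can write $a = x_D$ for some $D \in H$. Apply the first (general) direction of \cref{lem:singletonInSaturation} to $F$ to obtain $\KBG \models D \sqsubseteq \theta(F)$. Now the axiom $\theta(F) \sqsubseteq \theta(G) = \theta(\beta)$ belongs to $\beta_{\downarrow H} \subseteq \KBG$ by hypothesis (1), so $\KBG \models \theta(F) \sqsubseteq \theta(G)$, and transitivity yields $\KBG \models D \sqsubseteq \theta(G)$.

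To close the argument I want to apply the second (safe) direction of \cref{lem:singletonInSaturation} to $G$ in order to convert $\KBG \models D \sqsubseteq \theta(G)$ back into $x_D \in G^{\I,\eta'}$. This requires two side conditions on $G$: safety and the property that $\theta(D') \in H$ for every $\exists r.D' \in \sub(G)$. Safety is immediate from the \gelo definition. For the second, observe that range restriction ($\vars(G) \subseteq \vars(F)$) guarantees $\theta$ is defined on all variables occurring in any such $D'$, so $\theta(D')$ is a bona fide element of $D'_{\downarrow H}$; hypothesis (2) then gives $\theta(D') \in H$. With both conditions verified, the safe direction of \cref{lem:singletonInSaturation} produces $x_D \in G^{\I,\eta'}$, i.e.\ $a \in G^{\I,\eta'}$, completing the inclusion $F^{\I,\eta'} \subseteq G^{\I,\eta'}$.

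The step I expect to require the most care is the verification that $\theta(D') \in D'_{\downarrow H}$ for every existential subconcept $\exists r.D'$ of $G$, since it quietly uses all three \gelo restrictions (range restriction to make $\theta$ total on $\vars(D')$, safety of $G$ to make the second direction of \cref{lem:singletonInSaturation} applicable, and canonicity of $\theta$ coming from the singleton valuation). Everything else is an essentially mechanical chain: shrink to singleton valuations, push the witness $x_D$ through $F$ via the forward direction of \cref{lem:singletonInSaturation}, transit through the ground instance in $\KBG$, and pull it back through $G$ via the backward direction.
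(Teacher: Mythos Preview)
Your proof is correct and follows essentially the same route as the paper: reduce to singleton valuations via \cref{thm:singletonModel}, push through the forward inclusion of \cref{lem:singletonInSaturation} on $F$, use hypothesis~(1) to transit through $\theta(\beta)\in\KBG$, and pull back via the backward inclusion on $G$ using safety and hypothesis~(2). One small overcomplication: you invoke range restriction to argue that $\theta(D')\in D'_{\downarrow H}$, but this is unnecessary---the canonical substitution $\theta_{\eta'}$ is total on $N_X$ with values in $H$ by construction, so $\theta(D')\in D'_{\downarrow H}$ holds automatically (range restriction is only needed, as you already noted, to license \cref{thm:singletonModel}).
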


\begin{taggedblock}{arxiv}
\begin{proof}
    Since $\beta$ is an \gelo axiom, $\beta$ is range restricted and $F$ is linear.
    Therefore, by \cref{thm:singletonModel}, $\I\models^2 \beta$ iff $F^{\I,\eta}\subseteq G^{\I,\eta}$ for every singleton valuation $\eta$.
    Now, let $\eta$ be any such singleton valuation and $\theta$ be the corresponding canonical substitution.
    Then by \cref{lem:singletonInSaturation}, $F^{\I,\eta}\subseteq\set{x_D\in\Delta^\I\mid \KBG\models D\sqsubseteq\theta(F)}$.
    Since $\beta$ is an \gelo axiom, $G$ is a safe concept.
    Furthermore, $\theta(D)\in D_{\downarrow H}\subseteq H$ for every $\exists r.D\in\sub(G)$ by Condition 2).
    Therefore, by \cref{lem:singletonInSaturation}, $G^{\I,\eta}\supseteq\set{x_D\in\Delta^\I\mid \KBG\models D\sqsubseteq\theta(G)}$.
    Finally, by Condition 1) $\theta(\beta)=\theta(F)\sqsubseteq\theta(G)\in\KBG$, therefore
    $F^{\I,\eta}\subseteq\set{x_D\in\Delta^\I\mid \KBG\models D\sqsubseteq\theta(F)}\subseteq \set{x_D\in\Delta^\I\mid\KBG\models D\sqsubseteq\theta(G)}\subseteq G^{\I,\eta}$.
\end{proof}
\end{taggedblock}

By combining \cref{cor:canonical:entail} and \cref{cor:canonical:model}, for the given \gelo ontology $\KB$ and an $\EL$ axiom $\alpha$, we can define the \emph{smallest} concept base $H$ such that the canonical interpretation $\I$ w.r.t. $\KBG=\KB_{\downarrow H}$ and $H$ satisfies all axioms $\beta\in\KB$ under the second-order semantics and entails $\alpha$ only if $\KBG\models\alpha$ and thus only if $\KB\models^{\ast}\alpha$.
Note that Condition  2) in \cref{cor:canonical:model} is recursive over $H$.
Therefore, the required $H$ is defined as a fixed point limit for this condition.

\begin{defn}[Expansion \& Expansion Base]
    \label{def:expansionBase}
    Let $\KB$ be a $\ELX$ ontology and $\alpha = F \sqsubseteq E$ an $\EL$ axiom.
    Let $H^0=\set{F}\cup\set{D\mid\exists r.D \in\sub(F)}$, and $H^{i+1}=H^i\cup\bigcup_{\exists r.D\in \sub^+(\KB)} D_{\downarrow H^i}$ for $i\ge 0$.
    We call $H^{\infty} =\bigcup_{i \geq 0} H^{i}$ the \emph{expansion base} and $\KBG=\KB_{\downarrow H^{\infty}}$ the \emph{expansion} for $\KB$ w.r.t $\alpha$.
\end{defn}

We show that the expansion base $H^\infty$ is indeed a fixed point of the required condition:

\begin{lem}
    \label{lem:Fixpoint}
    Let $H^\infty$ be the expansion base for $\KB$ w.r.t. $\alpha$.
    Then $D_{\downarrow H^{\infty}}\subseteq H^{\infty}$ for every $\exists r.D\in\sub^+(\KB)$.
\end{lem}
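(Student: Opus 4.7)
The plan is to exploit the fact that the sequence $H^0 \subseteq H^1 \subseteq H^2 \subseteq \cdots$ is monotone by construction and that any concept $D$ contains only finitely many variables, so every ground instance of $D$ over $H^\infty$ already lies in some finite stage $H^k$.

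First, I would record the trivial monotonicity $H^i \subseteq H^{i+1}$, which is immediate from the union in the inductive clause of \cref{def:expansionBase}. Next, fix an arbitrary $\exists r.D \in \sub^+(\KB)$ and an arbitrary element $D' \in D_{\downarrow H^\infty}$. By definition of $D_{\downarrow H^\infty}$, there exist variables $X_1, \dots, X_n \in \vars(D)$ and concepts $C_1, \dots, C_n \in H^\infty$ such that $D' = D[X_1/C_1, \dots, X_n/C_n]$. Since $H^\infty = \bigcup_{i \geq 0} H^i$, for each $j \in \{1, \dots, n\}$ there is an index $k_j$ with $C_j \in H^{k_j}$.

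Now set $k := \max_{1 \leq j \leq n} k_j$; this maximum exists because $\vars(D)$ is finite (concepts are finite syntactic objects). By monotonicity, $C_j \in H^{k_j} \subseteq H^k$ for every $j$, so the substitution $[X_1/C_1, \dots, X_n/C_n]$ is a substitution with codomain in $H^k$, i.e.\ $D' \in D_{\downarrow H^k}$. Because $\exists r.D \in \sub^+(\KB)$, the inductive clause of \cref{def:expansionBase} gives $D_{\downarrow H^k} \subseteq H^{k+1}$, hence $D' \in H^{k+1} \subseteq H^\infty$. Since $D'$ was arbitrary, $D_{\downarrow H^\infty} \subseteq H^\infty$, as required.

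I do not anticipate any serious obstacle here: the argument is a standard finiteness-plus-monotonicity fixed-point argument. The only subtlety worth flagging is the implicit use of the fact that $\vars(D)$ is finite (so a single stage $H^k$ can accommodate \emph{all} the substitution values simultaneously); if one tried to substitute infinitely many variables at once this step would fail, but the syntactic finiteness of $\ELX$ concepts rules that out.
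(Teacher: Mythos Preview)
Your proof is correct and follows essentially the same line as the paper's own proof: pick an arbitrary ground instance, use monotonicity of the chain $H^0\subseteq H^1\subseteq\cdots$ together with finiteness of $\vars(D)$ to find a single stage $H^k$ containing all substituted concepts, and conclude via $D_{\downarrow H^k}\subseteq H^{k+1}\subseteq H^\infty$. The paper's version is just slightly terser (it does not spell out the $\max$ argument), but the content is identical.
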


\begin{taggedblock}{arxiv}
\begin{proof}
Take any $D$ with $\exists r.D\in\sub^+(\KB)$ and $D'\in D_{\downarrow H^\infty}$.
Then $D'=D[X_1/C_1,\dots,X_n/C_n]$ for $\set{X_1,\dots,X_n}=\vars(D)$ and $\set{C_1,\dots,C_n}\subseteq H^\infty=\bigcup_{i \geq 0} H^{i}$.
Since $H^i\subseteq H^{i+1}$ for every $i\ge 0$, then $\set{C_1,\dots,C_n}\subseteq H^i$ for some $i\ge 0$.
Then $D'\in D_{\downarrow H^i}\subseteq H^{i+1}\subseteq H^\infty$.
\end{proof}
\end{taggedblock}

By combining \cref{cor:canonical:entail}, \cref{cor:canonical:model} and \cref{lem:Fixpoint}, we now prove the following result:

\begin{thm}
	\label{thm:canonicalModelK}
    Let $\KB$ be an \gelo ontology, $\alpha$ an $\EL$ axiom, and $\KB^\infty$ the expansion of $\KB$ w.r.t. $\alpha$.
    Then $\KB\models^2\alpha$ implies $\KB^\infty\models\alpha$.
\end{thm}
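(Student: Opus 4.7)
The plan is to prove this by exhibiting a canonical interpretation that already satisfies $\KB$ under the second-order semantics, so that entailment from $\KB$ propagates down to this interpretation, and from there to the first-order expansion $\KB^\infty$ via \cref{cor:canonical:entail}.

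Concretely, I would take $\I := \I(\KB^\infty, H^\infty)$ to be the canonical interpretation from \cref{def:canonicalModel} with the $\EL$ ontology $\KB^\infty = \KB_{\downarrow H^\infty}$ and concept base $H^\infty$. The first step is to verify that $\I \models^2 \KB$. For each axiom $\beta = C \sqsubseteq D$ in $\KB$ I would apply \cref{cor:canonical:model}: its Condition~(1) — that $\beta_{\downarrow H^\infty} \subseteq \KB^\infty$ — is immediate from the definition of $\KB^\infty$, and Condition~(2) — that $E_{\downarrow H^\infty} \subseteq H^\infty$ for every $\exists r.E \in \sub(D)$ — is supplied by the fixed-point property \cref{lem:Fixpoint} once one observes that such $\exists r.E$ occur positively in $\KB$, since they sit inside the right-hand side $D$ of $\beta$ under a sequence of existential restrictions and conjunctions, all of which preserve polarity.

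With $\I \models^2 \KB$ in hand, the hypothesis $\KB \models^2 \alpha$ immediately yields $\I \models^2 \alpha$, and since $\alpha$ is ground this is just $\I \models \alpha$. To conclude, I would invoke \cref{cor:canonical:entail} on $\alpha = F \sqsubseteq E$ with $H = H^\infty$ and $\KBG = \KB^\infty$; its two prerequisites, $F \in H^\infty$ and $D \in H^\infty$ for every $\exists r.D \in \sub(F)$, are baked directly into the base case $H^0 = \set{F} \cup \set{D \mid \exists r.D \in \sub(F)}$ of \cref{def:expansionBase} and hence hold automatically. The corollary then produces $\KB^\infty \models \alpha$, which is the desired conclusion.

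The only step I expect to require any attention is the polarity bookkeeping in the verification of Condition~(2) above — making sure that every $\exists r.E$ appearing inside the right-hand side of an axiom in $\KB$ really is in $\sub^+(\KB)$ so that \cref{lem:Fixpoint} may be applied. This is however routine once the definition of positive occurrence is unfolded, and no further ideas beyond the machinery already developed in this section are needed; the real technical work has been done in the singleton-valuation theorem \cref{thm:singletonModel} and in the characterization of $F^{\I,\eta}$ given by \cref{lem:singletonInSaturation}.
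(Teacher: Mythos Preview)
Your proposal is correct and follows essentially the same route as the paper's own proof: build the canonical interpretation $\I(\KB^\infty,H^\infty)$, use \cref{cor:canonical:model} together with \cref{lem:Fixpoint} to get $\I\models^2\KB$, transfer the hypothesis to $\I\models\alpha$, and finish with \cref{cor:canonical:entail} using $H^0\subseteq H^\infty$. Your explicit remark about the polarity bookkeeping (that $\exists r.E\in\sub(D)$ for $D$ on the right of $\beta$ lies in $\sub^+(\KB)$) is a detail the paper leaves implicit but is indeed the only point requiring a moment's thought.
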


\begin{taggedblock}{arxiv}
\begin{proof}    
    Let $H^\infty$ be the expansion base for $\KB$ and $\alpha$, and $\I$ the canonical interpretation for $\KB^\infty$ and $H^\infty$.
    By \cref{lem:Fixpoint}, conditions 1) and 2) of \cref{cor:canonical:model} are satisfied for $H=H^\infty$, $\KBG=\KB^\infty$ and every $\beta\in\KB$.
    This implies $\I\models^2\KB$.
    Then $\KB\models^2\alpha$ implies $\I\models\alpha$.    
    Furthermore, by \cref{def:expansionBase}, the conditions of \cref{cor:canonical:entail} are satisfied for $H=H^\infty \supseteq H^{0}$, $\KBG=\KB^\infty$ and $\alpha$.
    Hence, $\I\models\alpha$ implies $\KBG=\KB^\infty\models\alpha$, as required.    
\end{proof}
\end{taggedblock}

An immediate consequence of \cref{thm:canonicalModelK} is that the schema semantics coincides with second-order semantics for \gelo ontologies.

\begin{thm}\label{thm:SemanticsCoincide}
	Let $\KB$ be a \gelo ontology and $\alpha$ an $\EL$ axiom.
	Then $\KB \models^{2} \alpha \Leftrightarrow \KB \models^{\ast}_\EL\alpha$.
\end{thm}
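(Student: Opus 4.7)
The plan is to prove the two directions of the biconditional separately, both of which follow quite directly from material already established, though they exploit the structure in different ways.

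For the direction $\KB \models^{\ast}_\EL \alpha \Rightarrow \KB \models^{2} \alpha$, which is the easy one, I would proceed by showing the general principle that second-order entailment is at least as strong as schema entailment (already sketched in \cref{sec:introduction}). Concretely, take any interpretation $\I$ with $\I \models^{2} \KB$ and any ground $\EL$ substitution $\theta$ applied to some $\beta \in \KB$. Defining the valuation $\eta(X) := \theta(X)^{\I}$ for each $X \in \vars(\beta)$, a routine structural induction on $\ELX$-concepts gives $C^{\I,\eta} = \theta(C)^{\I}$, so $\I \models \theta(\beta)$. Hence $\I \models \KB_{\downarrow\EL}$, and since $\alpha$ is ground, $\alpha_{\downarrow\EL}=\set{\alpha}$, so $\KB_{\downarrow\EL} \models \alpha$ yields $\I \models \alpha$, i.e., $\I \models^{2}\alpha$.

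For the nontrivial direction $\KB \models^{2} \alpha \Rightarrow \KB \models^{\ast}_\EL \alpha$, I would leverage \cref{thm:canonicalModelK}. Assume $\KB \models^{2} \alpha$ and let $H^{\infty}$ be the expansion base and $\KB^{\infty} = \KB_{\downarrow H^{\infty}}$ the expansion of $\KB$ w.r.t.\ $\alpha$. By \cref{thm:canonicalModelK}, $\KB^{\infty} \models \alpha$. Since every concept in $H^{\infty}$ is a (ground) $\EL$-concept, we have $\KB^{\infty} = \KB_{\downarrow H^{\infty}} \subseteq \KB_{\downarrow\EL}$, so by monotonicity of classical $\EL$-entailment $\KB_{\downarrow\EL} \models \alpha$. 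Again since $\alpha$ is ground, $\alpha_{\downarrow\EL}=\set\alpha$, so by \cref{def:semantics:schema} this is exactly $\KB \models^{\ast}_{\EL} \alpha$.

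The main conceptual work has already been absorbed into the supporting machinery: the singleton property (\cref{thm:singletonModel}), the characterization of canonical interpretations under singleton valuations (\cref{lem:singletonInSaturation}), and the fixed-point nature of the expansion base (\cref{lem:Fixpoint}) together feed into \cref{thm:canonicalModelK}. Consequently the theorem itself is essentially a packaging step and has no real obstacle; one only has to be careful to note that because $\alpha$ is ground the set $\alpha_{\downarrow\EL}$ collapses to $\set{\alpha}$, so that both implications reduce to reasoning about plain classical $\EL$-entailment from $\KB_{\downarrow\EL}$.
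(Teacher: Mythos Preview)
Your proposal is correct and follows essentially the same route as the paper: the $(\Leftarrow)$ direction is handled by translating ground substitutions into valuations via $\eta(X):=\theta(X)^{\I}$ and a structural induction, and the $(\Rightarrow)$ direction invokes \cref{thm:canonicalModelK} to get $\KB^{\infty}\models\alpha$ and then uses $H^{\infty}\subseteq\EL$ and monotonicity to conclude $\KB\models^{\ast}_{\EL}\alpha$. Your explicit remark that $\alpha_{\downarrow\EL}=\{\alpha\}$ because $\alpha$ is ground makes transparent a step the paper leaves implicit, but otherwise the arguments coincide.
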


\begin{taggedblock}{arxiv}
\begin{proof}
		$(\Rightarrow)$:
            By \cref{thm:canonicalModelK}, $\KB \models^{2} \alpha$ implies $\KB^\infty\models\alpha$. Then
            $\KB\models^{\ast}_{H^\infty}\alpha$, and so, $\KB\models^{\ast}_{\EL}\alpha$.
            
		$(\Leftarrow)$:
            By \cref{def:semantics:schema}, $\KB \models^{\ast}_\EL\alpha$ means that
            $\KB_{\downarrow H}\models\alpha$ where $H$ is the set of all $\EL$ concepts.
            In order to prove that $\KB \models^2\alpha$, take any $\I\models^2\KB$.
            We prove that $\I\models\alpha$.
            We first show that $\I\models\KB_{\downarrow H}$.
            Indeed, take any $\beta'\in\KB_{\downarrow H}$.
            Then $\beta'=\theta(\beta)$ for some substitution $\theta$ of concept variables to concepts from $H$.
            Define a valuation $\eta$ by $\eta(X)=\theta(X)^\I$.
            By induction on the structure of $\EL$ concepts $D$ it is easy to show that $D^{\I,\eta}=\theta(D)^\I$.
            Then $\I\models\beta'$ follows from $\I\models^2_\eta\beta\in\KB$ since $\I\models^2\KB$.
            This proves $\I\models\KB_{\downarrow H}$.
            Since $\KB_{\downarrow H}\models\alpha$, we obtain $\I\models\alpha$ as required.
\end{proof}
\end{taggedblock}

%%==============================================================================%%

\section{Decidability}
\label{sec:decidabilty}

Because the schema semantics and the second-order semantics coincide for \gelo, we immediately obtain semi-decidability of the entailment for the latter.
In general, entailment in \gelo ist still \emph{undecidable} because \gelo can express (unrestricted)  role-value-maps:

\begin{defn}[Role-Value-Maps]\index{role-value-maps}\label{def:roleValueMap}
    A \emph{role-value-map} \cite{baader2003restricted} is an axiom of the form
		$r_1 \circ \dots \circ r_m \sqsubseteq s_1 \circ \dots \circ s_n$
	with $m,n \geq 1$, $r_i,s_j\in N_R$ ($1\le i\le m$, $1\le j\le n$).
    The \emph{interpretation of role-value-maps} is defined by:
		$\I \models r_1 \circ \dots \circ r_m \sqsubseteq s_1 \circ \dots \circ s_n \text{ iff } r_1^{\I} \circ \dots \circ r_m^{\I}
		\subseteq s_1^{\I} \circ \dots \circ s_n^{\I}$, where $\circ$ is the usual composition of binary relations.
\end{defn}

\begin{lem}\label{lem:roleValueMapEmulation}
	For every interpretation $\I$ it holds
		$\I \models^{2} \exists r_1.\exists r_2. \dots \exists r_m.X \sqsubseteq \exists s_1.\exists s_2. \dots \exists s_n.X$
	iff
		$r_1^{\I} \circ \dots \circ r_m^{\I}
		\subseteq s_1^{\I} \circ \dots \circ s_n^{\I}$.
\end{lem}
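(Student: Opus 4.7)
The plan is to prove both directions of the equivalence by direct unfolding of the semantic definitions, exploiting the key idea that a singleton valuation of $X$ can be used to ``track'' a specific target element of the role chain.

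For the forward direction, I would take an arbitrary $(a_0, a_m) \in r_1^{\I} \circ \dots \circ r_m^{\I}$, obtain the intermediate witnesses $a_1,\dots,a_{m-1}$ from the relational composition, and then choose the singleton valuation $\eta$ with $\eta(X) = \{a_m\}$. By construction $a_0 \in (\exists r_1.\exists r_2. \dots \exists r_m.X)^{\I,\eta}$, and the second-order satisfaction hypothesis yields $a_0 \in (\exists s_1.\exists s_2. \dots \exists s_n.X)^{\I,\eta}$. Unfolding this gives a chain $a_0, b_1, \dots, b_{n-1}, b_n$ along the roles $s_1,\dots,s_n$ with $b_n \in \eta(X) = \{a_m\}$, forcing $b_n = a_m$ and hence $(a_0, a_m) \in s_1^{\I} \circ \dots \circ s_n^{\I}$.

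For the backward direction, I would fix an arbitrary valuation $\eta$ and an arbitrary $a_0 \in (\exists r_1.\exists r_2. \dots \exists r_m.X)^{\I,\eta}$, extract the witnesses $a_1,\dots,a_m$ with $(a_{i-1}, a_i) \in r_i^{\I}$ and $a_m \in \eta(X)$, apply the role-composition inclusion hypothesis to $(a_0, a_m)$ to get $b_1, \dots, b_{n-1}$ with $(a_0, b_1) \in s_1^{\I}$, $(b_{i-1}, b_i) \in s_i^{\I}$, and $(b_{n-1}, a_m) \in s_n^{\I}$, and finally use $a_m \in \eta(X)$ to conclude that $a_0 \in (\exists s_1.\exists s_2. \dots \exists s_n.X)^{\I,\eta}$. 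Since $\eta$ was arbitrary, $\I\models^2$ the axiom.

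There is no real obstacle here, as both directions are almost immediate once one picks the right valuation. The only ``trick'' is the choice of the singleton $\eta(X)=\{a_m\}$ in the forward direction, which is exactly the phenomenon that makes the singleton property (\cref{thm:singletonModel}) the natural tool for this class of axioms; the axiom is range restricted and its left-hand side is linear in $X$, so nothing more than a singleton is ever needed. I would present both directions as short parallel arguments, perhaps sharing notation for the role chains $a_0, \dots, a_m$ and $a_0, b_1, \dots, b_{n-1}, a_m$.
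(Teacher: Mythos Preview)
Your proposal is correct and follows essentially the same approach as the paper's proof: both directions proceed by direct unfolding of the semantics, with the forward direction hinging on the choice of the singleton valuation $\eta(X)=\{a_m\}$ (the paper uses $\{z\}$) to pin down the endpoint of the chain. Your write-up is in fact slightly more explicit than the paper's about why the $s$-chain must terminate at $a_m$, and your remark linking this to \cref{thm:singletonModel} is a nice aside, though the paper does not invoke that theorem here.
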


\begin{taggedblock}{arxiv}
\begin{proof}
	\
	\begin{itemize}
		\item[$\Rightarrow$:]
			Assume that
			\[
				\I \models^{2} \exists r_1.\exists r_2. \dots \exists r_m.X \sqsubseteq \exists s_1.\exists s_2. \dots \exists s_n.X
			\]
			Then we need to show that
			\begin{align*}
				\phantom{\forall x,z \in
				\Delta^{\I}:\ } &
				\begin{aligned}
					\mathllap{\forall x,z \in
						\Delta^{\I}:\ }
					(\exists y_1, \dots, y_{m-1} \in
					\Delta^{\I}: & \langle x,y_1
					\rangle \in r_1^{\I}, \langle y_1, y_2 \rangle \in
					r_2^{\I}, \dots,                              \\
					                      & \langle y_{m-1}, z \rangle \in
					r_m^{\I})
				\end{aligned}
				\\
				\phantom{\Rightarrow\ }  &
				\begin{aligned}
					\mathllap{\Rightarrow\ } (\exists y_1, \dots, y_{n-1} \in
					\Delta^{\I}: & \langle x,y_1
					\rangle \in s_1^{\I}, \langle y_1, y_2 \rangle \in
					s_2^{\I}, \dots,                                                 \\
					                      & \langle y_{n-1}, z \rangle \in s_n^{\I})
				\end{aligned}
			\end{align*}
			Take any $x, z \in \Delta^{\I}$ such that
			\[
				\exists y_1, \dots, y_{m-1} \in
				\Delta^{\I}: \langle x,y_1
				\rangle \in r_1^{\I}, \langle y_1, y_2 \rangle \in
				r_2^{\I}, \dots, \langle y_{m-1}, z \rangle \in r_m^{\I}
			\]
			Let $\eta$ be a valuation such that $\eta(X) = \{z\}$. Then $x \in
				(\exists r_1.\exists r_2. \dots \exists r_m.X)^{\I, \eta}$
			and it follows that $x \in (\exists s_1.\exists s_2. \dots \exists
				s_n.X)^{\I, \eta}$. Then
			\[
				\exists y_1, \dots, y_{n-1} \in
				\Delta^{\I}: \langle x,y_1
				\rangle \in s_1^{\I}, \langle y_1, y_2 \rangle \in
				s_2^{\I}, \dots, \langle y_{n-1}, z \rangle \in s_n^{\I}
			\]
		\item[$\Leftarrow$:]
			Assume that
			\begin{align*}
				\phantom{\forall x,z \in
				\Delta^{\I}:\ } &
				\begin{aligned}
					\mathllap{\forall x,z \in
						\Delta^{\I}:\ }
					(\exists y_1, \dots, y_{m-1} \in
					\Delta^{\I}: & \langle x,y_1
					\rangle \in r_1^{\I}, \langle y_1, y_2 \rangle \in
					r_2^{\I}, \dots,                              \\
					                      & \langle y_{m-1}, z \rangle \in
					r_m^{\I})
				\end{aligned}
				\\
				\phantom{\Rightarrow\ }  &
				\begin{aligned}
					\mathllap{\Rightarrow\ } (\exists y_1, \dots, y_{n-1} \in
					\Delta^{\I}: & \langle x,y_1
					\rangle \in s_1^{\I}, \langle y_1, y_2 \rangle \in
					s_2^{\I}, \dots,                                                 \\
					                      & \langle y_{n-1}, z \rangle \in s_n^{\I})
				\end{aligned}
			\end{align*}
			We need to show
			that
			\[
				\I \models^{2} \exists r_1.\exists r_2. \dots \exists r_m.X \sqsubseteq \exists s_1.\exists s_2. \dots \exists s_n.X
			\]
			Take any $x$ and $\eta$ then if $x \in (\exists r_1.\exists r_2.
				\dots \exists r_m.X)^{\I, \eta}$ then
			\begin{align*}
				\exists y_1, \dots, y_{m-1}, z \in
				\Delta^{\I}: & \langle x,y_1
				\rangle \in r_1^{\I}, \langle y_1, y_2 \rangle \in
				r_2^{\I}, \dots,                              \\
				                      & \langle y_{m-1}, z \rangle \in
				r_m^{\I}, z \in \eta(X)
			\end{align*}
			Then from the role-value-map axiom it follows that
			\[
				\exists y_1, \dots, y_{n-1} \in
				\Delta^{\I}: \langle x,y_1
				\rangle \in s_1^{\I}, \langle y_1, y_2 \rangle \in
				s_2^{\I}, \dots, \langle y_{n-1}, z \rangle \in s_n^{\I}
			\]
			Then $x \in (\exists s_1.\exists s_2. \dots \exists s_n.X)^{\I, \eta}$. As $x$ and $\eta$ were chosen
			arbitrarily, this proves the proposition.
            \qedhere
	\end{itemize}
\end{proof}
\end{taggedblock}

\begin{thm}
	Axiom entailment in \gelo is undecidable.
\end{thm}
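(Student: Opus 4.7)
The plan is to reduce from the entailment problem for $\EL$ extended with role-value-maps, which is known to be undecidable already for the original KL-ONE language~\cite{shmidt-schauss89klone-undecidable}. The key tool is \cref{lem:roleValueMapEmulation}, which states that the role-value-map $r_1 \circ \dots \circ r_m \sqsubseteq s_1 \circ \dots \circ s_n$ holds in an interpretation $\I$ if and only if $\I \models^{2} \exists r_1.\exists r_2. \dots \exists r_m.X \sqsubseteq \exists s_1.\exists s_2. \dots \exists s_n.X$.

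First, I would verify that each such emulation axiom $\gamma = \exists r_1.\cdots\exists r_m.X \sqsubseteq \exists s_1.\cdots\exists s_n.X$ is actually an \gelo axiom (\cref{def:sensibleGel}). The variable $X$ occurs on both sides, so $\vars(\exists s_1.\cdots\exists s_n.X) = \{X\} = \vars(\exists r_1.\cdots\exists r_m.X)$, making $\gamma$ range restricted. The left-hand side contains $X$ only once, hence is linear. On the right-hand side, $X$ only occurs directly under an existential restriction $\exists s_n.X$, which matches the grammar of safe concepts in \cref{def:safeConcepts}.

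Next, I would define a translation from \ac{DL} ontologies with role-value-maps to \gelo ontologies: given an $\EL$ ontology $\KBG$ with role-value-maps, replace each role-value-map axiom by its emulation axiom $\gamma$ to obtain an \gelo ontology $\KB$. By \cref{lem:roleValueMapEmulation}, the models of $\KBG$ (under the standard interpretation of role-value-maps) are exactly the second-order models of $\KB$. Consequently, for any ground $\EL$ axiom $\alpha$, we have $\KBG \models \alpha$ in the role-value-map extension if and only if $\KB \models^{2} \alpha$ in \gelo. Since the former problem is undecidable and the reduction is effective, entailment in \gelo is undecidable as well.

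The only subtle point is making sure that fresh concept variables are used for distinct emulated role-value-maps so that the second-order semantics quantifies them independently; this is harmless since role-value-maps use no concept variables at all, and any choice of pairwise distinct fresh variables $X_i$ in the emulation axioms preserves the equivalence established by \cref{lem:roleValueMapEmulation} on each role-value-map separately. I do not expect a genuine obstacle here — the work is essentially bookkeeping on top of the already established emulation lemma and the classical undecidability result.
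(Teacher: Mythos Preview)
Your proposal is correct and follows essentially the same approach as the paper: reduce from entailment in $\EL$ with role-value-maps via \cref{lem:roleValueMapEmulation}, after checking that the emulation axioms lie in \gelo. Two small remarks: the precise undecidability result you need is for $\EL$ with role-value-maps (the paper cites \cite{baader2003restricted} rather than the original KL-ONE result), and your concern about fresh variables is unnecessary, since by \cref{def:semantics:second-order} each axiom in $\KB$ is quantified over valuations independently, so reusing the same variable $X$ in all emulation axioms is already fine.
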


\begin{proof}
	Follows directly from \cref{lem:roleValueMapEmulation} and the fact that
	axiom entailment is undecidable in $\EL$ extended with role-value-maps \cite{baader2003restricted}.
\end{proof}

The reason for this undecidiblity, is deep nesting of concept variables on the right side of axioms under existential restrictions.
Such nested variables result in infinite expansion base $H^{\infty}$ as the following example shows:

\begin{example}
    Consider the \gelo ontology $\KB = \set{X \sqsubseteq \exists r. \exists r.X}$ and $\alpha=A \sqsubseteq B$.
    Then according to \cref{def:expansionBase}, we have $
    H^1 = \set{A}$, 
    $\KB^1 = \set{A \sqsubseteq \exists r. \exists r.A}$,
    $H^2 = \set{A, \exists r.A}$, 
    $\KB^2 = \set{A \sqsubseteq \exists r. \exists r.A, \exists r.A \sqsubseteq \exists r. \exists r.\exists r.A}$,
    $H^3 = \set{A, \exists r.A, \exists r.\exists r.A}$, etc.
\end{example}

If we restrict \gelo so that variables on the right side do not appear under nested existential restrictions, we can show that the expansion $\KB^{\infty}$ of the ontology is, in fact, polynomial in the size of $\KB$, which gives us polynomial decidability of the (schema and second-order) entailment.

\begin{defn}[\gelt]
    \label{def:restriction:decidability}
	An \gelt axiom is an \gelo axiom $\alpha$ such that for
    every $\exists r.D\in\sub^+(\alpha)$, either $D=X\in N_X$ or $\vars(D)=\emptyset$.
\end{defn}

\begin{lem}
	\label{lem:domainsEqual}
	Let $\KB$ be an \gelt ontology, $\alpha$ an $\EL$ axiom and $H^{\infty}$ the expansion base for $\KB$ and $\alpha$.
    Then $H^{\infty}=H^0\cup \set{D\mid \exists r.D\in\sub^+(\KB)\:\&\:\vars(D)=\emptyset}$.
\end{lem}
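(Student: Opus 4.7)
The plan is to unfold the recursive definition of $H^{i+1}$ in \cref{def:expansionBase} and use the \gelt restriction from \cref{def:restriction:decidability} to show that the sequence stabilizes after a single step.

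The key step is a case analysis on each occurrence $\exists r.D \in \sub^+(\KB)$. Since $\KB$ is an \gelt ontology, \cref{def:restriction:decidability} forces either $D = X \in N_X$ or $\vars(D) = \emptyset$. In the former case, $D_{\downarrow H^i} = \set{X[X/C] \mid C \in H^i} = H^i$, which contributes nothing new to the next iteration. In the latter case there are no variables to substitute, so $D_{\downarrow H^i} = \set{D}$, a singleton that does not depend on $i$. Combining the two cases, the recurrence collapses for every $i \geq 0$ to
\[
H^{i+1} \;=\; H^i \cup G, \qquad G \;:=\; \set{D \mid \exists r.D \in \sub^+(\KB) \text{ and } \vars(D) = \emptyset}.
\]

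A straightforward induction on $i$ then shows $H^i = H^0 \cup G$ for every $i \geq 1$: the base case $i = 1$ follows directly from the analysis above, and the inductive step is immediate since $G \cup G = G$. Taking the union over all $i$ yields $H^\infty = H^0 \cup G$, which is exactly the claimed equality.

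I do not anticipate any real obstacle here; the lemma is essentially a syntactic unpacking of \cref{def:expansionBase} once the two-case dichotomy is established. The only mild subtlety is interpreting $D_{\downarrow H^i}$ correctly when $\vars(D) = \emptyset$: under the definition, the empty substitution yields $D$ itself, so the singleton $\set{D}$ is added once and never grows. This is precisely what makes the fixed point reachable after a single iteration and, as used in the subsequent complexity argument, keeps $|H^\infty|$ polynomial in the size of $\KB$ together with $\alpha$.
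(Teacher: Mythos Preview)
Your proposal is correct and follows essentially the same route as the paper: both perform the two-case analysis on $\exists r.D\in\sub^+(\KB)$ dictated by the \gelt restriction, observe that $D_{\downarrow H^i}$ is either $H^i$ or $\{D\}$, and conclude that the recurrence collapses to $H^{i+1}=H^i\cup G$. You are slightly more explicit than the paper in spelling out the induction and the full equality (the paper's proof literally writes only the $\subseteq$ direction), but the argument is the same.
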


\begin{taggedblock}{arxiv}
\begin{proof}
    By \cref{def:expansionBase}, $H^{i+1}=H^i\cup\bigcup_{\exists r.D\in\sub^+(\KB)} D_{\downarrow H^i}$ for $i\ge 0$.
    By \cref{def:restriction:decidability}, $\exists r.D\in\sub^+(\KB)$ implies $D=X\in N_X$ or $\vars(D)=\emptyset$.
    If $D=X$ then $D_{\downarrow H^i}= X_{\downarrow H^i}=H^i$.
    If $\vars(D)=\emptyset$ then $D_{\downarrow H^i} = \set{D}$.
    Then $H^{i+1}= H^i\cup \set{D\mid \exists r.D\in\sub^+(\KB)\:\&\:\vars(D)=\emptyset}$.    
    Hence $H^\infty=\bigcup_{i\ge 0} H^i \subseteq H^0\cup\set{D\mid \exists r.D\in\sub^+(\KB)\:\&\:\vars(D)=\emptyset}$.    
\end{proof}
\end{taggedblock}

Since the elements of $H^\infty$ are subsets of (ground) concepts appearing in $\KB$ and $\alpha$, we obtain:

\begin{thm} 
    \label{thm:polynomial}
	Let $\KB$ be a \gelt ontology and $\alpha$ an $\EL$ axiom. 
    Then the entailment $\KB \models^{2} \alpha$ is decidable in polynomial time in the size of $\KB$ and $\alpha$.
\end{thm}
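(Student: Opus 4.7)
The plan is to reduce the second-order entailment to classical $\EL$ entailment on the expansion $\KB^{\infty}$, and then to argue polynomial-time decidability of the latter. By \cref{thm:canonicalModelK}, $\KB\models^2\alpha$ already implies $\KB^{\infty}\models\alpha$. The converse direction follows because $\KB^{\infty}\subseteq\KB_{\downarrow\EL}$ (every concept in $H^{\infty}$ is an $\EL$ concept), so $\KB^{\infty}\models\alpha$ gives $\KB\models^{\ast}_{\EL}\alpha$, and then by \cref{thm:SemanticsCoincide} we conclude $\KB\models^2\alpha$. This reduces the problem to deciding $\KB^{\infty}\models\alpha$ in classical $\EL$.

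By \cref{lem:domainsEqual}, the expansion base satisfies $H^{\infty}\subseteq H^0\cup\{D\mid\exists r.D\in\sub^+(\KB)\:\&\:\vars(D)=\emptyset\}$, which is linear in $|\KB|+|\alpha|$. The canonical interpretation $\I=\I(\KB^{\infty},H^{\infty})$ of \cref{def:canonicalModel} therefore has polynomially many domain elements $\{x_C\mid C\in H^{\infty}\}$. Rather than materializing $\KB^{\infty}$ itself (a single \gelt axiom with $k$ variables could contribute up to $|H^{\infty}|^k$ ground instances), I would construct $\I$ by a saturation procedure that keeps $\KB$ in its schema form and processes each axiom $\beta=C\sqsubseteq D\in\KB$ as a rule: at every node $x_E$, search for a substitution $\theta$ with $\theta(X_i)\in H^{\infty}$ such that $\theta(C)$ is currently entailed at $x_E$, and if so, derive $\theta(D)$ at $x_E$. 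The linearity of $C$ (\cref{def:sensibleGel}) makes this pattern match a recursive, essentially deterministic traversal that runs in polynomial time, while the range restriction $\vars(D)\subseteq\vars(C)$ ensures the matching uniquely determines $\theta|_{\vars(D)}$ and hence $\theta(D)$.

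Because the number of concept and role assertions derivable at the polynomially many nodes is itself polynomial (bounded in terms of $|H^{\infty}|$ and $|\sub(\KB\cup\{\alpha\})|$), and each matching test runs in polynomial time, the saturation converges in polynomial time. By \cref{cor:canonical:entail}, $\KB^{\infty}\models\alpha$ is equivalent to $\alpha$ holding in $\I$, which is a polynomial check on a polynomial-size model. The main obstacle will be a rigorous soundness and completeness argument for the schema-based saturation against the possibly exponentially large ground ontology $\KB^{\infty}$: every derivation in the schema saturation must be simulated by a chain of classical $\EL$-inferences from ground instances, and vice versa. The inductive structure of \cref{lem:singletonInSaturation}, which ties canonical substitutions to singleton valuations and characterises the extensions of $\ELX$ concepts in $\I$, provides the main technical tool for establishing this correspondence.
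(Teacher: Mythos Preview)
Your reduction of $\KB\models^2\alpha$ to the classical entailment $\KB^\infty\models\alpha$ via \cref{thm:canonicalModelK} and \cref{thm:SemanticsCoincide} is correct and is exactly what the paper does. The divergence is in how the size of $\KB^\infty$ is handled.

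The paper's route is much shorter: it observes that, without loss of generality, every \gelt axiom contains at most one concept variable. Indeed, by \cref{def:restriction:decidability} every $\exists r.D\in\sub^+(\beta)$ has $D\in N_X$ or $\vars(D)=\emptyset$; together with safety this forces the right-hand side of $\beta$ to be a (top-level) conjunction of ground concepts and concepts of the form $\exists r.X$. Splitting this conjunction yields axioms $C\sqsubseteq D_i$ with $|\vars(D_i)|\le 1$, and any remaining left-side variable $Y\notin\vars(D_i)$ may be replaced by $\top$, which preserves both semantics by monotonicity (\cref{lem:growEta}). After this linear-size normalisation one has $|\KB^\infty|\le |\KB|\cdot|H^\infty|$, polynomial outright, and a standard polynomial-time $\EL$ reasoner finishes the job.

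Your schema-level saturation is a genuinely different route, but it carries the gap you yourself flag, and the ``essentially deterministic'' matching claim is too optimistic: linearity of $C$ only means that the possible values of each variable can be computed independently; a $k$-variable axiom still admits up to $|H^\infty|^k$ matches at a given node, and nothing in your sketch bounds the number of productive ones by a polynomial. Closing this gap essentially forces the one-variable normalisation above (after which each rule has at most $|H^\infty|$ instantiations per node), so that normalisation is precisely the missing idea.
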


\begin{taggedblock}{arxiv}
\begin{proof}
    By \cref{lem:domainsEqual}, $H^{\infty}=H^0\cup \set{D\mid \exists r.D\in\sub^+(\KB)\:\&\:\vars(D)=\emptyset}\subseteq\sub^-(\alpha)\cup\sub^+(\KB)$.
    Hence $\sizeof{H^\infty}\le \sizeof{\KB}+\sizeof{\alpha}$.
    Since $\KB^\infty=\KB_{\downarrow H^\infty}$, $\sizeof{\KB^\infty}\le\sizeof{\KB}\cdot\sizeof{H^\infty}^V$ where $V$ is the maximal number of concept variables in axioms of $\KB$.
    It can be shown that, w.l.o.g., $V\le 1$ for \gelt ontologies.    
    Hence, $\sizeof{\KB^\infty}\le \sizeof{\KB}\cdot(\sizeof{\KB}+\sizeof{\alpha})$.
    
    By \cref{thm:canonicalModelK}, $\KB \models^{2} \alpha$ implies $\KB^\infty \models \alpha$.
    Conversely, $\KB^\infty \models \alpha$ implies $\KB\models^{\ast}_\EL\alpha$, which implies $\KB \models^{2} \alpha$ by \cref{thm:SemanticsCoincide}.
    Hence, entailment $\KB \models^{2} \alpha$ can be reduced to the classical $\EL$ entailment in polynomial time.
    Since the latter is polynomially decidable \cite{baader2005pushing}, this implies that the entailment $\KB \models^{2} \alpha$ is also decidable in polynomial time.    
\end{proof}
\end{taggedblock}

\section{Conclusions and Outlook}
\label{sec:conclusions}

In this paper, we left behind the usual restriction of \acp{DL} to fragments of
\ac{FOL} and introduced concept variables directly into \acp{DL}.
These concept variables can be understood as simple placeholders for concrete
concepts, giving us axiom schemas. This results in a semantics that replaces
variables by a specific set of concepts and in this way reduces reasoning to the classical
case. Or they can be understood more strongly, as universally quantified
concepts, similar to predicates in \ac{SOL}. This gives us a semantics that
interprets variables as arbitrary subsets of the interpretation domain and
results in the DL being a fragment of \ac{SOL}.

We applied this extension to $\EL$ and analyzed the difference in
entailed conclusions by the two semantics. We defined a fragment for which the
conclusions coincide, given us semi-decidability also for second-order semantics.
We also showed that for a slight limitation of this fragment, second-order semantics (and
schema semantics) even become decidable.

In this decidable fragment \gelt, we can express a range of features that
usually require special constructors in classical $\EL$: We can express role chain axioms that reduce a chain of roles to a
	      connection via one role. What would normally be expressed as $r_1 \circ
		      \dots \circ r_n \sqsubseteq s$ (i.e.\ role-value-maps where the right
	      side is a single role, cf. \cref{def:roleValueMap}), we can express as
	      $\exists r_1. \exists r_2. \dots \exists r_n.X \sqsubseteq \exists s.X$.
	      For example $\textsf{father} \circ \textsf{father} \sqsubseteq
		      \textsf{grandfather}$ is equivalent to $\exists
		      \textsf{father}.\exists \textsf{father}.X \sqsubseteq \exists
		      \textsf{grandfather}.X$. We can also express self restrictions on the right side of axioms,
	      i.e.\ classical axioms of the form $C \sqsubseteq \exists
		      r.\mathrm{Self}$, meaning $\forall x: (x \in
		      C^{\I}) \Rightarrow (\langle x, x \rangle \in
		      r^{\I})$ (cf.\ e.g.\ \cite{wang2013consequence}). We express this as $C \sqcap X \sqsubseteq \exists r.X$. For example
	      $\textsf{GreatApes} \sqsubseteq \exists \textsf{recognize}.\mathrm{Self}$ is
	      equivalent to $\textsf{GreatApes} \sqcap X \sqsubseteq \exists
		      \textsf{recognize}.X$.
         We can express positive occurrences of (local) \emph{role-value-map concepts}, i.e., concepts of the form $r \subseteq s$ interpreted as $\set{x\mid \forall y: \tuple{x,y}\in r^{\I} \Rightarrow \tuple{x,y}
    \in s^{\I}}$ (cf.\ e.g.\ \cite{donini2003complexity}). 
    E.g., we express $C\sqsubseteq (r \subseteq s)$ as $C \sqcap \exists r.X \sqsubseteq
    \exists s.X$. 
    For example $\textsf{Male} \sqsubseteq (\textsf{isParentOf}
    \subseteq \textsf{isFatherOf})$ is equivalent to $\textsf{Male} \sqcap \exists \textsf{isParentOf}. X \sqsubseteq
    \exists \textsf{isFatherOf}. X$.
    Finally, we can express restrictions that generalize \emph{all} of these constructs, for example, axioms of the form:
    \begin{equation}
    \label{ax:generalized}
    C_0\sqcap \exists r_1.(C_1\sqcap\exists r_2.(C_2\dots\sqcap\exists r_n.(C_n\sqcap X)\dots))\sqsubseteq \exists s.X,\quad (n\ge 0)
    \end{equation}
Note that \eqref{ax:generalized} can be expressed with self-restrictions over fresh roles: $C_i\sqsubseteq\exists h_i.\text{Self}$ $(0\le i\le n)$ and role chain axiom:
$h_0\circ r_1\circ h_2\circ r_2\cdots r_n\sqsubseteq s$.
Thus, it is not clear whether our decidable fragment has more expressive power than known polynomial extensions of $\EL$ \cite{baader2008pushing, wang2013consequence}. 

It is possible to generalize \cref{thm:SemanticsCoincide} and \cref{thm:polynomial} to entailments of arbitrary $\ELX$ axioms $\alpha$ because such entailments can be always reduced to entailment of $\EL$ axioms by simply replacing all concept variables in $\alpha$ with new atomic concepts, not occurring in $\KB$
or $\alpha$. 
Intuitively, to prove that every (second order or schema) model of the ontology satisfies $\alpha$ under each valuation $\eta$ or substitution $\theta$, we can extend this model by interpreting the new atomic concepts according to the value of $\eta(X)$ or $\theta(X)$ on the variables $X$ which they replace. 
This still remains a model of the ontology since it does not contain these new concepts.

Summarizing, the results in this paper show that an extension of \acp{DL} to
being fragments of \ac{SOL}, instead of \ac{FOL}, is possible (while remaining
decidable for reasonable restrictions) and does allow expressing facts in a new
way without the use of special constructors. This enlightens the relationship
between \ac{FOL}, \ac{SOL} and \acp{DL} and opens the way for further extensions
of \acp{DL} outside \ac{FOL}.

%%
%% Define the bibliography file to be used
\bibliography{paper}

%%
%% If your work has an appendix, this is the place to put it.
\appendix

\section{Acronyms}

\begin{acronym}
	\acro{ss}[Schema-S]{schema semantics}
	\acro{sos}[SO-S]{second-order semantics}
    \acro{DL}[DL]{Description Logic}
    \acro{PL}{Propositional Logic}
    \acro{FOL}{First-Order Logic}
    \acro{SOL}{Second-Order Logic}
    \acro{ODP}{Ontology Design Pattern}
    \acro{ML}{Modal Logic}
\end{acronym}
\end{document}